\renewcommand{\S}{S^*} 
\newcommand{\T}{T^*}
\newcommand{\reg}{\mathop{\textup {Reg}}}
\newcommand{\unreg}{\mathop{\textup {Unreg}}}
\renewcommand{\bar}{\overline}
\newcommand{\starves}{\succ}
\definecolor{Edgar}{rgb}{0,0.08,0.6}
\definecolor{Comment}{rgb}{0.5,0,0.5}
\newcommand{\comment}[1]{}
\renewcommand{\comment}[1]{{\color{Comment} #1}}
\newtheorem{theorem}{Theorem}
\newtheorem{proposition}{Proposition}
\newtheorem{remark}{Remark}
\newtheorem{definition}{Definition}
\newtheorem{corollary}{Corollary}
\newtheorem{lemma}{Lemma}
\begin{document}

\title{A model of host response to a multi-stage pathogen}

\author{Edgar Delgado-Eckert}
\address{Departent of Biosystems Science and Engineering, Swiss Federal Institute of
Technology Zurich (ETH Z\"urich), Basel, Switzerland.}
      
\author{Michael Shapiro}
\address{Department of Pathology, Tufts University, Boston, MA 02111}
\email{michael.shapiro@tufts.edu}
\thanks{The second author wishes to acknowledge support from  NIH
  Grant K25 AI079404-02.}

\date{\today}

\begin{abstract}
We model the immune surveillance of a pathogen which passes through $n$
immunologically distinct stages. The biological parameters of this system
induce a partial order on the stages, and this, in turn, determines which
stages will be subject to immune regulation. This corresponds to the
system's unique asymptotically stable fixed point.
\end{abstract}

\maketitle

\section{Introduction}

Pathogens that traverse different stages during their life cycle or during
an infection process have been studied since the late nineteenth century.
The most prominent genus is \textit{Plasmodium}, causer of Malaria, a
pathology first described by Hippocrates in the fourth century B.C., which
is still a major public health problem in many continents \cite%
{MalariaReview}. Other important examples are Trypanosoma \cite%
{ChagasReview} and Epstein-Barr virus (EBV) which is a member of the family
of herpes viruses. Indeed, it seems likely that this entire family uses
multiple stages \cite{ThorleyLawson2008195}. Our focus is on Epstein-Barr
virus, which is known to cycle through at least four different stages during
infection within the human body \cite{EBVreview}. (For a quick sketch of
EBV biology, see Section~\ref{sec:discussion}). One remarkable
characteristic of infections with many of such pathogens is life-long
persistent infection \cite{[17]}, \cite{ThorleyLawson2008195}, \cite{[18]}, 
\cite{MalariaReview}, \cite{ChagasReview}.

Our main goal in this article is to study the properties of the immune
response to such a pathogen using mathematical modeling. Mathematical
approaches to studying host-pathogen interactions have steadily increased in
the last 4 decades. For entry into the corresponding body of literature, we
recommend \cite{NowakMay}, \cite{Perelson1999} and \cite%
{KillerCellDynamics}. What to include in such a mathematical model depends
not only on the host and pathogen in question, but also the taste and
purpose of the modeler. Thus models vary in scope and complexity. A simple
model might include uninfected tissue, infected tissue and an immune
response. More complex models can involve such factors as the details of
T-cell activation, competing immune responses to multiple epitopes and the
mutation, selection and consequent evolution of the pathogen. Much of this
literature is driven by the urgent need to understand the host-pathogen
dynamics of HIV infection. (A Pubmed search on the terms \textquotedblleft
HIV\textquotedblright\ and \textquotedblleft mathematical
model\textquotedblright\ produces close to 400 hits.) Our model is motivated
by the broad outlines of EBV biology which it both simplifies and
generalizes.

In the mathematical model we propose herein, the pathogen traverses a cycle
of immunologically distinct stages during infection of the host. On the
pathogen side, our model includes the natural proliferation or decay of the
pathogen population at each stage, the rate at which each stage is lost to
produce (or become) the next stage and any gain factor involved in this
process. (For example, one lytically infected cell may produce $\sim 10^{4}$
free virions.) On the host side, it includes a distinct immune response to
each stage, proliferation of this response due to encounter with the
pathogen, the ability of this response to kill or disable the pathogen, and
the decay of the response in the absence of the pathogen. Omitted from this
model is the supply of uninfected tissue which supports the cycle of
infected stages. This omission is only likely to be important when this
supply is a limiting factor. This could happen when the immune response is
weak or non-existent, as when a naive host is first exposed or in the case
of an immunosuppressed host. As a consequence, this model may be more
realistic in describing the long-term behavior of persistent infection when
this is not tissue-limited.

We will see that when the host and pathogen stand in this relation to each
other, there is a unique stable accommodation between them. This appears in
the model as the unique biologically meaningful asymptotically stable fixed
point. In particular, it determines exactly which stages will fall under
immune regulation and determines the levels of pathogen population and host
response at which this mutual accommodation occurs.

In Section \ref{sec:formulation} we will present the model's equations and
the range of parameter values under consideration. We will start by
considering parameter values such that the pathogen must traverse all stages
in order to establish infection. In particular, no stage is capable of
proliferating and establishing infection independent of the others. In
addition, we will restrict our attention to \emph{generic} values of the
parameters and will exclude certain parameter sets of measure zero.

The central topic of this article will be the analysis of the steady state
behavior of our model. To this end, we start by calculating necessary and
sufficient conditions for the pathogen to establish infection. We view this
both in terms of the basic reproductive constant, $R_{0}$, and in terms of
the linear stability analysis of the uninfected equilibrium at which all
stages of the pathogen as well as their corresponding immune responses
become extinct.

We continue the steady state analysis by determining further fixed points of
the model's equations. As we will see, for generic parameter values, this
system has exactly $2^{n}$ fixed points, (where $n$ is the number of
stages), depending on which stages are regulated by the immune response. In
general, not all of these are biologically relevant since a given pattern of
regulated and unregulated stages can give negative population levels at
steady state. We give a criterion for detecting the biologically meaningful
ones.

If the pathogen is able to establish infection, ($R_{0}>1$), the parameters
induce a partial order on the pathogen's stages. We say stage $j$ starves
stage $k$ (in symbol $j\succ k$) if immune regulation at stage $j$ deprives
stage $k$ of sufficient population to support immune regulation. A stage $k$
is called \emph{starvable }if there is another stage $j$ such that $j\succ
k. $ If no such $j$ exists, $k$ is called \emph{unstarvable}. One of our
main results is the fact that, generically, the system has a unique
asymptotically stable fixed point, namely, the one at which all unstarvable
stages are regulated and all starvable stages are unregulated.

The portrait we have just drawn holds for generic values of the parameter
set. It turns out that there are highly non-generic parameter sets for which
the system becomes a Lotka-Volterra predator-prey system. While
mathematically interesting, this is unlikely to occur in nature. The
condition that the parameters must obey has co-dimension $n-1$. In
particular, such parameters form a set of measure zero.

Finally, we show how to generalize the previous results to the case where
the pathogen has individual stages which are capable of establishing
infection independently of the remaining stages.

This article is organized as follows: In Section~\ref{sec:formulation}, we
introduce the model and discuss its parameters. In Section~\ref%
{sec:FixedPoints}, we examine the basic properties of the system's fixed
points. In Sections \ref{sec:R0},~\ref{sec:PartialOrder} and~\ref%
{sec:stability} we embark on the detailed analytical study of the model's
equilibrium behavior. One of the main tools used is the partial order on
stages, defined in Section~\ref{sec:PartialOrder}. In Section \ref%
{sec:Lotka-Volterra}, we present the highly non-generic Lotka-Volterra
scenario. Finally, in Section~\ref{sec:selfEstablishingStages} we consider
self-establishing stages. We close in Section~\ref{sec:discussion} with
discussing some of the model assumptions and consequences for the
characteristics of the immune response to a multi-stage pathogen. To provide
a more concrete contrast between the model and reality, we focus on EBV
infection biology.

\section{Formulation of the model}

\label{sec:formulation}

We model the interaction between a host and a pathogen which traverses $n\in%
\mathbb{N}$, $n>1$ different stages during its life-cycle. We treat these
stages as immunologically distinct, that is, the host produces a separate
immune response to each of them. Our model is given by 
\begin{align*}
\dot{S_{j}}&=F_{j}(S,T)=r_{j-1}f_{j-1}S_{j-1}
-a_{j}S_{j}-f_{j}S_{j}-p_{j}S_{j}T_{j}  \tag{*}  \label{Eq.BiologicalModel}
\\
\dot{T_{j}}& =G_{j}(S,T)=c_{j}S_{j}T_{j}-bT_{j}.
\end{align*}

The subscripts are taken modulo $n$. We use $S_j$ and $T_j$ for the pathogen
population and immune response\footnote{%
The notation $T_j$ is motivated by the T-cell response. However, it may
equally well refer to humoral response.} at each stage. These are assumed
non-negative. The parameters represent the following processes:

\begin{itemize}
\item $a_{j}$ is the decay rate of stage $S_{j}$. If $a_{j}$ is negative,
this state proliferates.

\item $f_{j}$ is the rate at which stage $S_{j}$ is lost to become (or
produce) stage $S_{j+1}$.

\item $r_{j}$ is an amplification factor in the process by which stage $%
S_{j} $ becomes (or produces) stage $S_{j+1}$. For example, the loss of one
lytically infected cell may produce $r_{j}\cong 10^{4}$ free virus.

\item $p_{j}$ represents the efficacy of the immune response $T_{j}$ in
killing infected stage $S_{j}$.

\item $c_{j}$ is the antigenicity of stage $S_{j}$, i.e., its efficacy in
inducing proliferation of immune response $T_{j}$.

\item $b$ is the natural death rate of the response $T_{j}$. We assume it is
the same for all stages.
\end{itemize}

We will refer collectively to a choice of values for these parameters as $%
\pi $. We will assume that with the possible exception of $a_{j}$, all of
these are positive. In Sections~\ref{sec:FixedPoints} through~\ref%
{sec:Lotka-Volterra}, we will assume that $a_{j}+f_{j}>0$. Thus, while some
stages may proliferate, none does this as fast as it is consumed in
producing the next stage. In Section~\ref{sec:selfEstablishingStages}, we
will lift this assumption and generalize our results to the case in which
there are self-establishing stages. We will also assume that the values of
these parameters are \emph{generic}. Accordingly, at several places we will
disregard certain sets of parameters which have measure zero. (See
Equations~ (\ref{Eq.FirstGenericityAssumption}) and (\ref%
{Eq.SecondGenericityAssumptionVersion1}).)

The model's equations can be simplified by using the following change of
coordinates 
\begin{align*}
\bar{S}_{j}& =c_{j}S_{j} \\
\bar{T}_{j}& =p_{j}T_{j}
\end{align*}
which produces the equations 
\begin{align*}
\dot{\bar{S}}_{j}& =\bar{F}_{j}(\bar{S},\bar{T})=\bar{r}_{j-1}f_{j-1}\bar{S }%
_{j-1}-a_{j}\bar{S}_{j}-f_{j}\bar{S}_{j}-\bar{S}_{j}\bar{T}_{j}  \notag \\
{\dot{\bar{T}}_{j}}& =\bar{G}_{j}(\bar{S},\bar{T})=\bar{S}_{j}\bar{T}_{j}-b 
\bar{T}_{j}  \tag{**}  \label{Eq.BasicModel} \\
\bar{r}_{j}& =\frac{c_{j+1}}{c_{j}}r_{j}  \notag
\end{align*}
We will henceforth assume our equations are in this form and omit the bars.
When we refer to a fixed point, we will mean a fixed point of system~(\ref%
{Eq.BasicModel}). The notations $(\S _{0},\dots ,\S _{n-1},T_{0}^{\ast
},\dots ,T_{n-1}^{\ast })$ or $(\S ,T^{\ast })$ will refer to such a fixed
point. We will say that a point is \emph{\ biologically meaningful} if its
components are non-negative.

We will adopt the following notational conventions. Sets such $[j,k]$ and $%
[j,k)$ are to be taken cyclically. That is to say, if $j < k$, then $[j,k] =
\{j,\dots,k\}$, while if $j > k$, $[j,k] = \{j,\dots,n-1,0,\dots,k\}$. We
take $[j,j)$ to be the empty product so that $[j,j)=1$. We abuse notation by
taking $[0,n) = \{0,\dots,n-1\}$.

\section{First properties of the fixed points}

\label{sec:FixedPoints}The fixed points are defined by the system of
polynomial equations 
\begin{eqnarray}
\dot{S}_{j} &=&F_{j}(S^{\ast },T^{\ast })=r_{j-1}f_{j-1}S_{j-1}^{\ast
}-S_{j}^{\ast }(a_{j}+f_{j}+T_{j}^{\ast })=0  \label{Eq.FixedPointEquations1}
\\
\dot{T}_{j} &=&G_{j}(S^{\ast },T^{\ast })=(S_{j}^{\ast }-b)T_{j}^{\ast }=0
\label{Eq.FixedPointEquations2}
\end{eqnarray}
where $j=0,...,n-1$ and indices outside the interval $[0,n-1]$ are
understood modulo $n.$

In this section, we give formulas that determine the population values at
fixed points of system~(\ref{Eq.BasicModel}). We will see that if stage $k$
is unregulated, $\S _j$ is determined as a ``follow-on'' population from the
previous stage. This gives rise to ``follow-on factors'' (defined below)
which will play a key role in this paper. This, in turn, will lead us to the
notion of generic parameter values, and we will see that for generic
parameter values, system~(\ref{Eq.BasicModel}) has exactly $2^n$ fixed
points.

This section is organized as a series of observations. These follow from
easy computations which we omit when they would clutter rather than clarify.

We start by defining these follow-on factors 
\begin{align*}
M_{j}& :=\frac{r_{j}f_{j}}{a_{j+1}+f_{j+1}} \\
M_{jk}& :=\prod_{\ell \in \lbrack j,k)}M_{\ell } \\
R_{0}& :=\prod_{j\in \lbrack 0,n)}M_{j}.
\end{align*}%
which we will use throughout this article. We will take $M_{jj}$ to be the
empty product. Given a fixed point $(\S ,T^{\ast })$, if $T_{j}^{\ast }\neq
0 $, we will say that $j$ is \emph{regulated}. Otherwise $j$ is \emph{\
unregulated}. We define 
\begin{align*}
\mathop{\textup {Reg}}(\S ,T^{\ast })& =\{j\mid T_{j}^{\ast }\neq 0\} \\
\mathop{\textup {Unreg}}(\S ,T^{\ast })& =\{j\mid T_{j}^{\ast }=0\}
\end{align*}

\begin{enumerate}
\item[\textbf{1)}] ~~ \textit{If there is a $j$ such that $\S _{j}=0$ then $(%
\S ,T^{\ast })=(\vec{0},\vec{0})$. } By (\ref{Eq.FixedPointEquations1}) if $%
\S _{j}=0$, then $\S _{j-1}=0$. By (\ref{Eq.FixedPointEquations2}) if $\S %
_{j}=0$, then $T_{j}^{\ast }=0$. Continuing in this way $\S _{j}=T_{j}^{\ast
}=0$ for all $j\in \lbrack 0,n)$.

\item[\textbf{2)}] ~~ \textit{If $j\in \mathop{\textup {Reg}}(\S ,T^{\ast
}), $ then $\S _{j}=b$.}

\item[\textbf{3)}] ~~ \textit{If $j\in \mathop{\textup {Unreg}}(\S ,T^{\ast
}),$ then $T_{j}^{\ast }=0$.}

\item[\textbf{4)}] ~~ \textit{If $j+1 \in \mathop{\textup {Unreg}}(\S .T^*)$%
, then $\S _{j+1} = \S _j M_j$.} This follows quickly from (\ref%
{Eq.FixedPointEquations1}).

\item[\textbf{5)}] ~~ \textit{If $[j+1,k]\subset \mathop{\textup {Unreg}}(\S %
,T^{\ast }),$ then $\S _{k}=\S _{j}M_{jk}$. } This follows by induction on
the previous observation.

\item[\textbf{6)}] ~~ \textit{If $j\in \mathop{\textup {Reg}}(\S ,T^{\ast })$
and $[j+1,k]\subset \mathop{\textup {Unreg}}(\S ,T^{\ast }),$ then $\S %
_{k}=bM_{jk}$.}

\item[\textbf{7)}] ~~ \textit{If $R_{0}\neq 1$, and $(\S ,T^{\ast })\neq (%
\vec{0},\vec{0}),$ then $\mathop{\textup {Reg}}(\S ,T^{\ast })\neq \emptyset 
$.} To see this, notice that by our first observation, we must have $\S %
_{j}\neq 0$ for all $j\in \lbrack 0,n)$. But if $\mathop{\textup {Reg}}(\S %
,T^{\ast })=\emptyset $, we must have $\S _{0}=R_{0}\S _{0}$.
\end{enumerate}

Before proceeding, two points are worth considering. The first is that we
have just introduced our first condition for a generic parameter set $\pi$,
that is, 
\begin{equation}
R_{0}= \prod_{j\in [0,n)} M_j \neq 1  \label{Eq.FirstGenericityAssumption}
\end{equation}
The second is that the $S$ population at an unregulated stage depends on
follow-on constants and the previous regulated stage. Observation 7
guarantees that at a non-trivial fixed point, there always is such a stage.
Accordingly, given $k$, we will define $h_k$ to be this stage, that is, $h_k$
is the unique stage such that $h_k\in\mathop{\textup {Reg}}(\S ,T^*)$ and $%
[h_k+1,k) \subset \mathop{\textup {Unreg}}(\S ,T^*)$. As we will see, this
notation is also useful in the case $k\in \mathop{\textup {Reg}}(\S ,T^*)$.

\begin{enumerate}
\item[\textbf{8)}] ~~ \textit{If $j\in\mathop{\textup {Reg}}(\S ,T^*)$, then 
$T^*_j = \frac{r_{j-1}f_{j-1}}{b}\S _{j-1} - (a_j+f_j)$.} This follows from (%
\ref{Eq.FixedPointEquations1}).

\item[\textbf{9)}] ~~ \textit{If $j\in\mathop{\textup {Reg}}(\S ,T^*)$, then 
$T^*_j = r_{j-1}f_{j-1} M_{{h_j}{j-1}} - (a_j+f_j)$. } This follows from the
fact that $\S _{j-1} = \S _{h_j}M_{{h_j}{j-1}}$ and $\S _{h_j} = b$.

\item[\textbf{10\label{item:starves})}] ~~ \textit{$T^*_j > 0$ if and only
if $M_{{h_j} j} > 1$.}
\end{enumerate}

This latter will be important in Section~\ref{sec:PartialOrder} and leads to
our second genericity requirement. We take 
\begin{equation}
M_{jk} \ne 1 \text{ for $j\ne k$.}
\label{Eq.SecondGenericityAssumptionVersion1}
\end{equation}
This may seem overly broad since we have only used $M_{j k } \ne 1$ for $j,
k \in \mathop{\textup {Reg}}(\S ,T^*)$, with $j=h_k$. However, we wish to
avoid making the genericity of $\pi$ depend on the fixed point in question.
The sets in question are all of co-dimension 1 and hence have measure 0.

We have now arrived at the place where all of the populations are determined
by the parameters and the pattern of regulation. We record these as follows:
 
\begin{equation}
\S_j=b \text{ and } \T_j = r_{j-1}f_{j-1} M_{h_j j-1} - (a_j+f_j) \text{ } \forall \text{ }
j\in\reg(\S,\T) 
\\
\label{Eq.MultipleStepTvalues}
\end{equation}%
\begin{equation}
\S_j = b M_{h_j j} \text{ and } \T_j=0 \text{ } \forall \text{ }
j\in\unreg(\S,\T) 
\label{Eq.MultipleStepSvalues}
\end{equation}%

We are now prepared to show that for generic parameters, (\ref{Eq.BasicModel}%
) has exactly $2^n$ fixed points. We will show that for an arbitrary set $R
\subset [0,n)$, there is exactly one fixed point $(\S ,T^*)$ with $%
\mathop{\textup {Reg}}(\S ,T^*) = R$. If $R=\emptyset$, we have seen that $(%
\S ,T^*) = (\vec{0},\vec{0})$. Under the assumption that $R\ne \emptyset$, $%
\S _j$ and $T^*_j$ are determined for $j\in R$ by (\ref{Eq.MultipleStepTvalues}).
Observe that if $\pi$ is generic, then the $T$-values
determined by (\ref{Eq.MultipleStepTvalues}) are non-zero, so that $\mathop{\textup {Reg}}(\S ,T^*) = R$
as required. Since the population
values for $j\notin R$ are now determined by (\ref{Eq.MultipleStepSvalues}),
we see that there is exactly one fixed point for each $R\subset [0,n)$ as
required.

In general, not all of these non-vanishing fixed points (defined by (\ref%
{Eq.MultipleStepTvalues}) and (\ref{Eq.MultipleStepSvalues}) under the
genericity assumptions) are biologically relevant, since a given pattern of
regulated and unregulated stages can give $T_{j}<0$ for one or more values
of $j$. We will call non-vanishing fixed points (i.e., different from the
uninfected equilibrium) \emph{infected equilibria}. Infected equilibria with
the property $T_{j}^{\ast }\geq 0$ $\forall $ $j=0,...,n-1$ ($S_{j}^{\ast
}>0 $ $\forall $ $j=0,...,n-1$ follows from the assumed positivity of
parameters) are called \emph{biologically meaningful infected equilibria}.

\section{Establishing infection: Stability of the uninfected equilibrium}

\label{sec:R0}

We start by deriving the expression for the basic reproductive number $R_{0}$
already revealed above (see (\ref{Eq.FirstGenericityAssumption})). Our first
genericity assumption (\ref{Eq.FirstGenericityAssumption}) allows us to
eliminate the case $R_{0}=1$. We will then show that the remaining dichotomy 
$R_{0}<1$ or $R_{0}>1$ determines whether the pathogen is able to establish
infection. We give two proofs here. The first uses the interpretation of the
parameters as describing the life-cycle of the pathogen. The second is based
on the analysis of eigenvalues of the Jacobian at the uninfected fixed
point. These two approaches correspond roughly to $R_{0}$ and $r_{0}$. For
an excellent discussion of these, see \cite{heffernan}.

Given $\pi $ and an arbitrary biologically meaningful point $%
(S,T)=(S_{0},\dots ,S_{n-1},T_{0},\dots ,T_{n-1})$, we define 
\begin{align*}
\widetilde{M}_{j}& =\frac{r_{j}f_{j}}{a_{j+1}+f_{j+1}+T_{j+1}} \\
\widetilde{R}& =\prod_{j=0}^{n-1}\widetilde{M}_{j}
\end{align*}
Note that at a fixed point $(S^{\ast },T^{\ast })$, $\widetilde{M}_{j}=M_{j}$
for $j\in \mathop{\textup {Unreg}}(S^{\ast },T^{\ast })$.

\begin{proposition}
\label{prop:R0} ~

\begin{enumerate}
\item The basic reproductive number of the system is $R_{0} $.

\item The reproductive number at an arbitrary biologically meaningful point $%
(S,T)$ is $\widetilde{R}$.

\item The overall reproductive number of the pathogen in the presence of any
(non-negative) immune response is less than its basic reproductive number in
the absence of an immune response.

\item If $R_{0}<1$, the pathogen fails to establish infection and the
uninfected fixed point $(\vec{0},\vec{0})$ is a global attractor.

\item At an infected fixed point, the reproductive number is exactly $1 $.
Thus, at an infected fixed point we have $\prod_{j\in \lbrack 0,n)} 
\widetilde{M}_{j}=1$.
\end{enumerate}
\end{proposition}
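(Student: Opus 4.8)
The plan is to dispatch parts (1)--(3) through the life-cycle reading of the parameters, part (4) with a weighted $\ell^1$ Lyapunov function coming from the left Perron eigenvector of the linearized pathogen dynamics, and part (5) by a one-line telescoping identity at the fixed point. The only genuinely technical item is the Lyapunov construction in (4); the rest is bookkeeping.

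For (1)--(3): in the absence of immune pressure, stage $S_j$ is removed at rate $a_j+f_j$, so a single stage-$j$ organism persists on average for time $1/(a_j+f_j)$ and, transitioning at rate $f_j$ with amplification $r_j$, gives rise to $r_jf_j/(a_j+f_j)$ organisms of stage $j+1$. Composing these factors once around the cycle, one organism at stage $0$ yields $\prod_{j\in[0,n)} r_jf_j/(a_j+f_j)=\prod_{j\in[0,n)} M_j=R_0$ organisms back at stage $0$ a loop later; this per-loop multiplier is the basic reproductive number, which is (1). (One may also observe that the spectral radius of the associated next-generation matrix is $R_0^{1/n}$, so the threshold ``$=1$'' is unambiguous.) Repeating the count in the presence of a standing response $(S,T)$ replaces the removal rate of stage $j$ by $a_j+f_j+T_j$, so the per-loop multiplier becomes $\prod_{j\in[0,n)} r_jf_j/(a_j+f_j+T_j)$, which equals $\prod_{j\in[0,n)}\widetilde M_j=\widetilde R$ after reindexing the denominator; this is (2). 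Part (3) is then immediate: $T_j\ge0$ forces $\widetilde M_j\le M_j$ termwise, hence $\widetilde R\le R_0$, with strict inequality as soon as the response is non-trivial.

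For (4), I would first record that the non-negative orthant is forward invariant ($\dot S_j=r_{j-1}f_{j-1}S_{j-1}\ge0$ when $S_j=0$, and $\dot T_j=0$ when $T_j=0$) and that solutions cannot blow up in finite time. The key step is to produce weights $\alpha_j>0$ making $V:=\sum_j\alpha_jS_j$ an exponentially decaying Lyapunov function. Put $\lambda:=R_0^{1/n}\in(0,1)$ and define $\alpha_0=1$, $\alpha_{j+1}=\alpha_j\,\lambda(a_j+f_j)/(r_jf_j)$; cyclic consistency $\alpha_n=\alpha_0$ holds because $\prod_j\tfrac{a_j+f_j}{r_jf_j}=1/R_0$ and $\lambda^n=R_0$. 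Reindexing the incoming term and using $\alpha_{j+1}r_jf_j=\alpha_j\lambda(a_j+f_j)$,
\[
\dot V=\sum_j\bigl(\alpha_{j+1}r_jf_j-\alpha_j(a_j+f_j+T_j)\bigr)S_j=-\sum_j\alpha_j\bigl((1-\lambda)(a_j+f_j)+T_j\bigr)S_j\le-\beta V,
\]
with $\beta:=(1-\lambda)\min_j(a_j+f_j)>0$ (here the standing assumption $a_j+f_j>0$ enters). Hence $S_j(t)\to0$ for every $j$; once $S_j<b$ for all large $t$, the equation $\dot T_j=(S_j-b)T_j$ forces $T_j\to0$ as well, so $(\vec0,\vec0)$ attracts every biologically meaningful trajectory. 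Finding the right weights --- essentially rescaling the cyclic matrix $(r_{j-1}f_{j-1})$ by its removal rates to expose its Perron data --- is the step I expect to be the main obstacle; after that (4) is routine.

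For (5), at an infected (non-vanishing) fixed point all $\S_j\ne0$ by the first observation of Section~\ref{sec:FixedPoints}, and the equation $F_j(\S,\T)=0$ from (\ref{Eq.FixedPointEquations1}) reads $r_{j-1}f_{j-1}\S_{j-1}=\S_j(a_j+f_j+\T_j)$. Dividing (the denominators are positive) gives, for every $j$ whether regulated or not, $\widetilde M_{j-1}=r_{j-1}f_{j-1}/(a_j+f_j+\T_j)=\S_j/\S_{j-1}$. Therefore $\widetilde R=\prod_{j\in[0,n)}\widetilde M_j=\prod_{j\in[0,n)}\S_{j+1}/\S_j=1$ by telescoping around the cycle, which is exactly the asserted $\prod_{j\in[0,n)}\widetilde M_j=1$.
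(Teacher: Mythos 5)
Your parts (1)--(3) coincide with the paper's argument: the same life-cycle bookkeeping (mean sojourn time $1/(a_j+f_j)$, or $1/(a_j+f_j+T_j)$ under immune pressure, times the yield $r_jf_j$, composed once around the cycle), so there is nothing to add there beyond your harmless aside on the next-generation spectral radius. Where you genuinely diverge is in (4) and (5). The paper treats both as immediate consequences of the reproductive-number interpretation: (4) ``follows immediately'' from $\widetilde R\le R_0<1$, and (5) is argued verbally (``at a fixed point the pathogen exactly reproduces itself''). You instead prove (4) with an explicit weighted Lyapunov function $V=\sum_j\alpha_jS_j$, with weights built from $\lambda=R_0^{1/n}$ so that cyclic consistency $\alpha_n=\alpha_0$ holds precisely because $\lambda^n=R_0$; the computation $\dot V\le-\beta V$ is correct (it uses $T_j\ge0$, forward invariance of the orthant, and the standing assumption $a_j+f_j>0$), and the subsequent step $S_j<b$ eventually forcing $T_j\to0$ closes the argument, modulo the routine remark that $T_j$ cannot blow up before that time since $S_j$ stays bounded. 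This actually delivers more than the paper states in that it is a rigorous proof that $(\vec0,\vec0)$ attracts every nonnegative trajectory, rather than a heuristic. For (5) your telescoping identity $\widetilde M_{j-1}=S_j^*/S_{j-1}^*$ (legitimate since all $S_j^*\neq0$ at an infected fixed point, by the paper's first observation in Section~\ref{sec:FixedPoints}) turns the paper's verbal claim into a one-line algebraic proof. In short: same route for (1)--(3), a more rigorous and self-contained route for (4)--(5); the paper's version buys brevity and biological transparency, yours buys an actual global-attractivity proof and an exact fixed-point identity.
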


\begin{proof}
To see that $R_{0}$ is the basic reproductive constant of this system,
consider the behavior of stage $S_{0}$ when the naive host encounters one
unit of this stage. The average lifespan at this stage is $1/(a_{0}+f_{0})$.
During the course of that lifespan, it produces $r_{0}f_{0}/(a_{0}+f_{0})$
units of stage $S_{1}$. This, in turn produces $%
(r_{1}f_{1}/(a_{1}+f_{1}))(r_{0}f_{0}/(a_{0}+f_{0}))$ units of stage $S_{2}$
. Continuing in this way, we see that the original introduction of a unit of 
$S_{0}$ results in $\prod_{j=0}^{n-1}(r_{j}f_{j}/(a_{j}+f_{j}))=
\prod_{j=0}^{n-1}M_{j}=R_{0}$ units of $S_{0}$.

To see that the basic reproductive number at an arbitrary point $(S,T)$ is $%
\prod_{j\in \lbrack 0,n)}\widetilde{M}_{j}$, observe that in the presence of
an immune response, the average lifespan of stage $S_{j}$ is $%
1/(a_{j}+f_{j}+T_{j})$. The result now follows as before. The third and
fourth statements now follow immediately, while the final statement follows
from observing that at a fixed point the pathogen exactly reproduces itself
and therefore has basic reproductive number 1. 
\end{proof}

We now examine $R_0$ from a slightly different perspective.

\begin{proposition}
\label{prop:hereBePartials} If $R_{0}<1,$ the eigenvalues of the Jacobian
matrix of system (\ref{Eq.BasicModel}) evaluated at $(S,T)=(\vec{0},\vec{0})$
have negative real part and, consequently, the uninfected fixed point is
asymptotically stable. On the other hand, if $R_{0}>1,$ at least one
eigenvalue of the Jacobian has positive real part and thus, the uninfected
equilibrium is unstable.
\end{proposition}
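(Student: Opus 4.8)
The plan is to compute the Jacobian of system~(\ref{Eq.BasicModel}) at the origin explicitly, exploit its block structure, and read off the spectrum. First I would observe that at $(\vec 0,\vec 0)$ the cross terms $\bar S_j\bar T_j$ in both $F_j$ and $G_j$ vanish, so the linearization decouples into an $S$-block and a $T$-block. The $T$-block is diagonal: $\partial \dot T_j/\partial T_k = -b\,\delta_{jk}$, contributing the eigenvalue $-b$ with multiplicity $n$; since $b>0$ these never cause instability. Thus everything hinges on the $n\times n$ $S$-block, whose entries are $\partial F_j/\partial S_j = -(a_j+f_j)$ on the diagonal and $\partial F_j/\partial S_{j-1} = r_{j-1}f_{j-1}$ on the cyclic subdiagonal, with all other entries zero. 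Call this matrix $A$; it is a circulant-type (cyclic weighted shift minus diagonal) matrix.

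The key step is to analyze the eigenvalues of $A$. I would write $A = -D + N$, where $D=\mathrm{diag}(a_0+f_0,\dots,a_{n-1}+f_{n-1})$ has positive entries (using the standing hypothesis $a_j+f_j>0$) and $N$ is the cyclic shift with positive weights $r_{j-1}f_{j-1}$. The characteristic polynomial $\det(A-\lambda I)$ can be expanded along the cyclic structure: because each row has exactly two nonzero entries arranged cyclically, $\det(A-\lambda I) = \prod_{j\in[0,n)}\bigl(-(a_j+f_j)-\lambda\bigr) - \prod_{j\in[0,n)} r_{j-1}f_{j-1}$ (the only two permutations contributing are the identity and the full $n$-cycle). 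Setting this to zero and dividing through by $\prod_j(a_j+f_j)$, an eigenvalue $\lambda$ satisfies
\begin{equation*}
\prod_{j\in[0,n)}\frac{r_{j-1}f_{j-1}}{a_j+f_j+\lambda} = 1,
\end{equation*}
i.e. $\prod_j \frac{r_j f_j}{a_{j+1}+f_{j+1}+\lambda}=1$. For $R_0<1$ I would show every root has $\mathrm{Re}\,\lambda<0$: if $\mathrm{Re}\,\lambda\ge 0$ then $|a_j+f_j+\lambda|\ge a_j+f_j$ for each $j$ (as $a_j+f_j>0$), so the modulus of the left side is at most $\prod_j \frac{r_jf_j}{a_{j+1}+f_{j+1}} = R_0 < 1$, contradicting that it equals $1$. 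For $R_0>1$ I would produce a real positive root: define $g(\lambda)=\prod_j \frac{r_jf_j}{a_{j+1}+f_{j+1}+\lambda}$ for real $\lambda\ge 0$; then $g(0)=R_0>1$, $g$ is continuous and strictly decreasing on $[0,\infty)$ with $g(\lambda)\to 0$, so by the intermediate value theorem there is a unique $\lambda^\ast>0$ with $g(\lambda^\ast)=1$, and this $\lambda^\ast$ is an eigenvalue of $A$, hence of the full Jacobian, giving instability.

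The main obstacle I anticipate is the determinant computation — justifying cleanly that only the identity permutation and the single $n$-cycle contribute to $\det(A-\lambda I)$, and handling the sign $(-1)^{n-1}$ that accompanies the $n$-cycle so that the characteristic equation comes out in the stated product form. This is a routine but slightly delicate combinatorial/sign bookkeeping step; once it is in place, the modulus estimate for the $R_0<1$ case and the monotonicity argument for the $R_0>1$ case are short. A secondary point worth a sentence is that complex roots come in conjugate pairs and the bound $|a_j+f_j+\lambda|\ge a_j+f_j$ used in the $R_0<1$ case holds for all such $\lambda$ with nonnegative real part, so no eigenvalue escapes the estimate.
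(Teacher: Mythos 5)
Your proposal is correct, but it takes a genuinely different route from the paper. The paper specializes the general characteristic-polynomial formula for $J(S,T)$ derived in its Appendix to $(\vec 0,\vec 0)$, obtaining the same factor $R(\lambda)=\prod_j(a_j+f_j+\lambda)-\prod_j r_jf_j$ alongside the $n$-fold root $-b$; it then handles $R_0>1$ by Routh's criterion (sign of the constant coefficient) and $R_0<1$ by viewing $R$ as a perturbation of the Hurwitz polynomial $\prod_j(a_j+f_j+\lambda)$ and invoking a robustness result (Corollary 4 of the cited paper) via a maximization of $|E(i\omega)/Q(i\omega)|$ over $\omega\ge 0$. You instead exploit that at the origin the Jacobian is block diagonal (both coupling blocks vanish), reduce to the cyclic $S$-block, and argue elementarily: for $R_0<1$ the estimate $|a_j+f_j+\lambda|\ge a_j+f_j$ for $\mathrm{Re}\,\lambda\ge 0$ rules out any root with nonnegative real part, and for $R_0>1$ monotonicity plus the intermediate value theorem produces an explicit real eigenvalue $\lambda^\ast>0$. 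Your approach is more self-contained (no Routh, no Hurwitz-perturbation corollary) and even yields slightly more in the unstable case (a positive \emph{real} eigenvalue); the paper's approach buys uniformity, since the same characteristic-polynomial and perturbation machinery is reused for the stability analysis of infected fixed points in Theorem~\ref{thm:moderatedImpliesStable}. Two small points to tidy: the division should be by $\prod_j(a_j+f_j+\lambda)$ (legitimate because such $\lambda$ with a vanishing factor cannot be roots, as $\prod_j r_jf_j>0$), not by $\prod_j(a_j+f_j)$ as you wrote in passing; and your displayed determinant omits the sign $(-1)^{n-1}$ on the $n$-cycle term, which you do flag and which does not affect the root equation you use.
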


\begin{proof}
The characteristic polynomial $P$ of $J(S,T)$ is given by the following
expression. 
\begin{equation*}
P(\lambda )=(-1)^{n}\prod\limits_{j=0}^{n-1}\left( (S_{j}-\sigma
_{j}-\lambda )(b+\lambda )-S_{j}(-(a_{j}+f_{j})+b)\right)
-\prod\limits_{j=0}^{n-1}\left( \rho _{j}(b+\lambda -S_{j+1})\right)
\end{equation*}
where we use the notation $\rho _{j}=r_{j}f_{j}$ and $\sigma
_{j}=a_{j}+f_{j}+T_{j}.$ (See Theorem~\ref{thm:charicteristicPolynomial} of
the Appendix for the details on how these expressions were derived.)
Evaluating these at $(\S ,T^{\ast })=(\vec{0},\vec{0})$ gives 
\begin{equation*}
P(\lambda )=(-1)^{n}(b+\lambda )^{n}\left(
\prod\limits_{j=0}^{n-1}(a_{j}+f_{j}+\lambda
)-\prod\limits_{k=0}^{n-1}r_{k}f_{k}\right) .
\end{equation*}
Thus, the eigenvalues of $J(\vec{0},\vec{0})$ are given by the $n$-fold root 
$\lambda =-b$ and the zeros of the polynomial $R(\lambda
):=\prod\limits_{j=0}^{n-1}(a_{j}+f_{j}+\lambda
)-\prod\limits_{k=0}^{n-1}r_{k}f_{k}.$ By our assumptions, all coefficients
of $R$ are positive except for $\prod\limits_{j=0}^{n-1}(a_{j}+f_{j})-\prod
\limits_{k=0}^{n-1}r_{k}f_{k},$ which is positive if and only if 
\begin{equation*}
R_{0}=\frac{\prod\limits_{k=0}^{n-1}r_{k}f_{k}}{\prod
\limits_{j=0}^{n-1}(a_{j}+f_{j})}<1
\end{equation*}
By Routh's criterion (\cite{OnRouthsCriterion}), if $R_{0}>1,$ at least
one root of $R$ (and therefore of the characteristic polynomial $P$) has
positive real part. Therefore, the equilibrium point $(S,T)=(\vec{0},\vec{0}
) $ becomes unstable for $R_{0}>1.$

We now wish to see that if $R_{0}<1,$ all roots of $P$ have strictly
negative real part implying that the equilibrium point $(S,T)=(\vec{0},\vec{0 })$ is (locally) asymptotically stable.

To see this, consider the polynomials 
\begin{equation*}
Q(\lambda ):=\prod\limits_{j=0}^{n-1}(a_{j}+f_{j}+\lambda )
\end{equation*}
and 
\begin{equation*}
E(\lambda ):=-\prod\limits_{j=0}^{n-1}r_{j}f_{j}.
\end{equation*}
Thus, $R=Q+E,$ and we can interpret $R$ as a perturbation of $Q$ by $E.$ By
our assumptions, the real parts of all roots of $Q$ are strictly negative,
in other words, $Q$ is strictly Hurwitz\footnote{%
A real polynomial is called \emph{strictly Hurwitz} if the real part of each
of its (complex) roots is strictly negative.}. Here the question arises as
to how big the perturbation $E$ can be, so that $R$ remains strictly
Hurwitz. By Corollary 4 in \cite{Lin'sPaperOnRobustness}, this will occur
if\footnote{%
We will see that such a maximum exists.} 
\begin{equation*}
\max_{\omega \geq 0}\left\vert \frac{E(i\omega )}{Q(i\omega )}\right\vert <1
\end{equation*}
where $i$ is the imaginary unit. We have 
\begin{align*}
\left\vert E(i\omega )/Q(i\omega )\right\vert & =\left\vert \left(
-\prod\limits_{j=0}^{n-1}r_{j}f_{j}\right) /\left(
\prod\limits_{j=0}^{n-1}(a_{j}+f_{j}+i\omega )\right) \right\vert \\
& =\sqrt{\left( \prod\limits_{j=0}^{n-1}r_{j}f_{j}\right) ^{2}/\left(
\prod\limits_{j=0}^{n-1}(a_{j}+f_{j}+i\omega )(a_{j}+f_{j}-i\omega )\right) }
\\
& =\sqrt{\left( \prod\limits_{j=0}^{n-1}r_{j}f_{j}\right) ^{2}/\left(
\prod\limits_{j=0}^{n-1}((a_{j}+f_{j})^{2}+\omega ^{2})\right) }
\end{align*}
It is now clear that this achieves its maximum, namely $R_{0}$, at $\omega
=0 $. In particular, $R_{0}<1$ implies that all eigenvalues of the Jacobian
have negative real part and the uninfected fixed point is asymptotically
stable. 
\end{proof}

\section{Starvation: A partial order on the stages}

\label{sec:PartialOrder}

In this section, we introduce a relation $j\succ k$ which we read as $j$ 
\emph{starves} $k$. This relation depends only on $\pi $. We will show that
this relation limits which patterns of regulation can appear in a
biologically meaningful fixed point, and will show that $j\succ k$ is a
strict partial order.

\begin{definition}
\label{def:starves}We will say stage $j$ \emph{starves} stage $k$ and write $%
j\succ k$ iff $M_{jk}<1.$
\end{definition}

The following proposition explains our use of the term starves.

\begin{proposition}
\label{prop:starves}Let the set of parameters $\pi $ be generic.
Furthermore, let $(S^{\ast },T^{\ast })$ be a biologically meaningful
infected equilibrium (in particular, $\textup{Reg}(S^{\ast },T^{\ast })\neq
\emptyset $) and let $j,k\in \lbrack 0,n)$ such that $j\neq k.$ If $j\succ k$
, then $j\in \mathop{\textup {Reg}}(S^{\ast },T^{\ast })$ implies $k\in %
\mathop{\textup {Unreg}}(S^{\ast },T^{\ast })$.
\end{proposition}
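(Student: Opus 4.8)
The plan is to argue by contradiction. Suppose that $(\S,\T)$ is a biologically meaningful infected equilibrium, that $j\starves k$ with $j\ne k$, and yet that $j,k\in\reg(\S,\T)$. Then $\reg(\S,\T)$ contains two distinct stages, so I would enumerate its elements in cyclic order as $g_0,g_1,\dots,g_{m-1}$ with $m\ge 2$, where $g_{i+1}$ denotes the next regulated stage after $g_i$ (indices read mod $m$), and write $j=g_a$, $k=g_b$ with $a\ne b$.

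The first point is that for each $i$ the nearest regulated predecessor of $g_{i+1}$ is exactly $g_i$, i.e. $h_{g_{i+1}}=g_i$: this is immediate from the definition of $h$, since $g_i\in\reg(\S,\T)$ while every stage in $[g_i+1,g_{i+1})$ is, by construction, unregulated. Next I would extract a strict inequality at each consecutive regulated pair. Because $g_{i+1}\in\reg(\S,\T)$ we have $\T_{g_{i+1}}\ne 0$, and because $(\S,\T)$ is biologically meaningful we have $\T_{g_{i+1}}\ge 0$; hence $\T_{g_{i+1}}>0$, and Observation~10 then gives $\Mjk{g_i}{g_{i+1}}=\Mjk{h_{g_{i+1}}}{g_{i+1}}>1$ for every $i$.

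It remains to reassemble $\Mjk{j}{k}$ from these pieces. Walking forward from $g_a$ one meets $g_{a+1},g_{a+2},\dots$ in turn before returning to $g_b$, so the cyclic interval $[g_a,g_b)$ is the disjoint union of the consecutive blocks $[g_a,g_{a+1})\cup[g_{a+1},g_{a+2})\cup\dots\cup[g_{b-1},g_b)$; taking products of follow-on factors over this decomposition yields
\[
\Mjk{j}{k}=\Mjk{g_a}{g_b}=\prod_{\ell\in[g_a,g_b)}M_\ell=\prod_{i=a}^{b-1}\Mjk{g_i}{g_{i+1}},
\]
where the index $i$ runs cyclically in $\mathbb{Z}/m$, so that the product has at least one factor since $a\ne b$. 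Each factor on the right exceeds $1$, whence $\Mjk{j}{k}>1$. But $j\starves k$ means $\Mjk{j}{k}<1$, a contradiction; therefore $k\in\unreg(\S,\T)$, as claimed.

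The routine ingredients are a single invocation of Observation~10 and the multiplicativity $\Mjk{g_a}{g_b}=\prod_{\ell\in[g_a,g_b)}M_\ell$ built into the definition of the follow-on factors. The step needing genuine care is the combinatorics of the last paragraph: one must verify that the definition of $h$ forces $h_{g_{i+1}}=g_i$, and that the forward arc $[g_a,g_b)$ splits exactly into the blocks lying between consecutive regulated stages. Both are elementary once the cyclic reading of $[\cdot,\cdot)$ is kept straight, but they are where the argument actually lives.
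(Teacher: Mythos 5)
Your proposal is correct and follows essentially the same route as the paper: decompose the cyclic interval $[j,k)$ at the intermediate regulated stages, apply Observation~10 (with $h_{g_{i+1}}=g_i$) to get each block factor $M_{g_i g_{i+1}}>1$, and multiply to contradict $M_{jk}<1$. The only cosmetic difference is that you enumerate all regulated stages rather than just those in $[j+1,k)$, and you correctly note that $T^*>0$ at regulated stages already follows from "regulated" plus "biologically meaningful," without needing genericity there.
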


\begin{proof}
Suppose $j\in \mathop{\textup {Reg}}(S^{\ast },T^{\ast })$ and assume to the
contrary that $k\in \mathop{\textup {Reg}}(S^{\ast },T^{\ast })$. Let $m\in {%
\ \mathbb{N}}_{0}$ be the number of regulated stages between $j$ and $k$.
That is, suppose $[j+1,k)\cap \mathop{\textup
    {Reg}}(S^{\ast },T^{\ast })=\{j_{1},\dots ,j_{m}\}$ and suppose  these
are ordered so that (cyclically) $j=j_{0}\leq j_{1}\leq  j_{2}\leq ...\leq
j_{m}<j_{m+1}=k$. Since $(S^{\ast },T^{\ast })$  is biologically meaningful
and $\pi $ is generic, $T_{j}^{\ast }>0$  for $j\in \mathop{\textup {Reg}}%
(S^{\ast },T^{\ast })$. By  observation~\ref{item:starves} of Section~\ref%
{sec:FixedPoints}, we  have $M_{j_{s}j_{s+1}}>1$ for $s=0,\dots ,m$. But
this gives 
\begin{equation*}
M_{jk}=\prod_{s=0}^{m}M_{j_{s}j_{s+1}}>1,
\end{equation*}
contradicting the assumption that $j\succ k$. 
\end{proof}

This result can be biologically interpreted as follows: If $j\succ k$,
regulation at stage $S_{j}$ starves stage $S_{k}$ of sufficient population
to support regulation.

\begin{proposition}
Let $\pi $ be such that $R_{0}>1.$ Then $\succ $ is a strict partial order.
\end{proposition}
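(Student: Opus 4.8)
The plan is to verify the three defining properties of a strict partial order for $\succ$ directly from Definition~\ref{def:starves}, using only the multiplicativity of the follow-on factors and the hypothesis $R_0>1$. Recall that $j\succ k$ means $M_{jk}<1$, where $M_{jk}=\prod_{\ell\in[j,k)}M_\ell$, and that $R_0=\prod_{j\in[0,n)}M_j$; also $M_{jj}=1$ is the empty product. The genericity assumption (\ref{Eq.SecondGenericityAssumptionVersion1}), namely $M_{jk}\neq 1$ for $j\neq k$, is in force, so for $j\neq k$ exactly one of $M_{jk}<1$ or $M_{jk}>1$ holds.

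First, irreflexivity: $j\succ j$ would require $M_{jj}<1$, but $M_{jj}=1$, so $j\succ j$ never holds. Second, transitivity: suppose $j\succ k$ and $k\succ \ell$, with $j,k,\ell$ distinct. The key observation is that the cyclic intervals compose, $[j,k)\cup[k,\ell)=[j,\ell)$ when one reads the indices as one pass around the cycle starting at $j$, provided $\ell$ is not ``passed'' before $k$ — and here I would need to treat the cyclic bookkeeping carefully. If the three stages occur in cyclic order $j,k,\ell$ (i.e. traversing from $j$ we meet $k$ before returning to $j$, and from $k$ we meet $\ell$ before returning to $k$, and these two arcs are disjoint up to endpoints), then $M_{j\ell}=M_{jk}M_{k\ell}<1\cdot 1=1$, giving $j\succ\ell$. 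The remaining cyclic configuration is when the arc $[j,k)$ and the arc $[k,\ell)$ together wrap all the way around, i.e. $[j,k)\cup[k,\ell)$ covers $[0,n)$ with $[\ell,j)$ left over; then $M_{jk}M_{k\ell}=R_0\cdot M_{j\ell}$, so $M_{j\ell}=M_{jk}M_{k\ell}/R_0 < 1/R_0 < 1$ since $R_0>1$. Either way $M_{j\ell}<1$, so $j\succ\ell$.

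Finally, asymmetry (which for a relation already shown transitive and irreflexive is automatic, but is also immediate here): if $j\succ k$ and $k\succ j$ with $j\neq k$, then $M_{jk}M_{kj}=R_0$ since $[j,k)$ and $[k,j)$ partition $[0,n)$; but $M_{jk}<1$ and $M_{kj}<1$ would force $R_0<1$, contradicting $R_0>1$. Hence $\succ$ is irreflexive and transitive, so it is a strict partial order. The one step that needs genuine care is the transitivity argument: the case split on the cyclic arrangement of $j,k,\ell$ is where $R_0>1$ is actually used, and it is worth writing out explicitly that in the ``wrap-around'' case one picks up an extra factor of $R_0$ in the numerator, which is harmless precisely because $R_0>1$. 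That is the main obstacle, and it is a bookkeeping obstacle rather than a conceptual one.
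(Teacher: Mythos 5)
Your proof is correct and follows essentially the same route as the paper's: irreflexivity from $M_{jj}=1$, transitivity via the same two-case split on whether the arcs $[j,k)$ and $[k,\ell)$ wrap around the cycle (picking up the factor $R_0$ in the wrap-around case), and asymmetry from $M_{jk}M_{kj}=R_0>1$. No gaps; nothing further needed.
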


\begin{proof}
We must show that $\succ $ is transitive, anti-reflexive and asymmetric.
Suppose $j\succ k$ and $k\succ \ell $. If $k\in \lbrack j,\ell ]$ then 
\begin{equation*}
(M_{j}\dots M_{k-1})(M_{k}\dots M_{\ell -1})=M_{j}\dots M_{\ell -1}.
\end{equation*}
Since each of the quantities on the left are less than one, so is their
product as required. On the other hand, if $\ell \in \lbrack j,k]$, then 
\begin{align*}
(M_{j}\dots M_{k-1})(M_{k}\dots M_{\ell -1})& =M_{j}\dots M_{j-1}M_{j}\dots
M_{\ell -1} \\
& =R_{0}M_{j}\dots M_{\ell -1}<1
\end{align*}
Since $R_{0}>1$, 
\begin{equation*}
M_{j}\dots M_{\ell -1}<1,
\end{equation*}
as required.

To see that $\succ $ is asymmetric, note that $R_{0}$ can be written as 
\begin{equation*}
R_{0}=M_{0}\dots M_{n-1}=\left( M_{j}\dots M_{k-1}\right) (M_{k}\dots
M_{j-1})
\end{equation*}
Thus $R_{0}>1$ and $j\succ k$ imply $1<M_{k}\dots M_{j-1}$ contradicting $%
k\succ j.$

Finally, to see that $\succ $ is anti-reflexive, note that by definition $%
M_{j j}=1$. 
\end{proof}

\begin{definition}
We will say that $S_{j}$ is \emph{{\ unstarvable} }if $j$ is $\succ $
-maximal. Otherwise $S_{j}$ is\emph{\ \emph{starvable}} . We will refer to
the set of indices of unstarvable respectively starvable stages as $Str(\pi
) $ respectively $Unstr(\pi )$.
\end{definition}

\begin{remark}
\label{NoComparisonNoStarvation} Note that if $\pi $ is such that no two
stages are comparable, then $\succ $ is empty and every stage is $\succ $
-maximal, so $\textup{Unstr}(\pi )=[0,n)$. In particular, $\textup{Unstr}
(\pi )\neq \emptyset $ $\forall $ $\pi .$
\end{remark}

\section{Stability of infected equilibria}

\label{sec:stability}

In this section, we consider biologically meaningful infected fixed points.
In particular, we can assume $R_{0}>1$. We will show that for generic $\pi $
there is exactly one asymptotically stable biologically meaningful fixed
point. We give two characterizations of this fixed point, one in terms of
pathogen populations and one in terms of immune response.

\begin{definition}
\label{def:saturatedModerated}Let $(\S ,T^{\ast })$ be an infected fixed
point. We will say that the pathogen populations are \emph{moderated at $%
(S^{\ast },T^{\ast })$} (or simply that $(S^{\ast },T^{\ast })$ is \emph{\
moderated}) if $\S _{j}<b$ for $j\in \mathop{\textup {Unreg}}(S^{\ast
},T^{\ast })$. We will say that the immune response is \emph{saturated at }$%
(S^{\ast },T^{\ast })$ (or simply that $(S^{\ast },T^{\ast })$ is \emph{\
saturated}) if $\mathop{\textup {Reg}}(S^{\ast },T^{\ast })=\textup{Unstr}
(\pi )$.
\end{definition}

\begin{remark}
Given a fixed point $(S^{\ast },T^{\ast })$, $\S _{j}=b$ for $j\in %
\mathop{\textup {Reg}}(S^{\ast },T^{\ast })$. Thus at a moderated fixed
point, at no stage is the pathogen population greater than the regulated
population.
\end{remark}

\begin{proposition}
\label{prop:saturatedEqualsModerated}Let the set of parameters $\pi $ be
generic. Furthermore, let $(S^{\ast },T^{\ast })$ be a biologically
meaningful infected fixed point. Then the pathogen populations are moderated
at $(S^{\ast },T^{\ast })$ if and only if the immune response is saturated
at $(S^{\ast },T^{\ast }).$
\end{proposition}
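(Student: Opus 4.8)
The plan is to rephrase both conditions in terms of the partial order $\starves$ and then play two earlier facts against each other. The translation of the pathogen condition is immediate: by (\ref{Eq.MultipleStepSvalues}), $S^{\ast}_j = bM_{h_j j}$ for every $j\in\unreg(S^{\ast},T^{\ast})$, so $(S^{\ast},T^{\ast})$ is moderated precisely when $M_{h_j j}<1$, i.e.\ $h_j\starves j$, for every unregulated $j$. The ingredients I would draw on are: (a) since $R_0>1$ (the standing assumption of this section), $\starves$ is a strict partial order on the finite set $[0,n)$, so $\unstarvable(\pi)$ is exactly its set of $\starves$-maximal elements and, by finiteness, every non-maximal stage lies $\starves$-below some maximal stage; and (b) by Proposition~\ref{prop:starves}, at a generic biologically meaningful infected equilibrium $j\starves k$ together with $j\in\reg(S^{\ast},T^{\ast})$ forces $k\in\unreg(S^{\ast},T^{\ast})$ — in particular two distinct regulated stages are never $\starves$-comparable.

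For ``moderated $\Rightarrow$ saturated'' I would prove $\reg(S^{\ast},T^{\ast})=\unstarvable(\pi)$ by two inclusions. If a $\starves$-maximal stage $m$ were unregulated, then moderation gives $h_m\starves m$, contradicting the maximality of $m$; hence $\unstarvable(\pi)\subseteq\reg(S^{\ast},T^{\ast})$. Conversely, if a regulated stage $j$ were not $\starves$-maximal, pick a $\starves$-maximal $m$ with $m\starves j$; by the inclusion just established $m$ is regulated, so $m$ and $j$ are distinct, $\starves$-comparable, regulated stages, contradicting (b). Hence $\reg(S^{\ast},T^{\ast})=\unstarvable(\pi)$, i.e.\ $(S^{\ast},T^{\ast})$ is saturated.

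For ``saturated $\Rightarrow$ moderated'' fix an arbitrary $j\in\unreg(S^{\ast},T^{\ast})$; by the translation above it is enough to show $h_j\starves j$. Since $j$ is unregulated it is not in $\unstarvable(\pi)$, hence not $\starves$-maximal, so there is a $\starves$-maximal $m$ with $m\starves j$; then $m\in\unstarvable(\pi)=\reg(S^{\ast},T^{\ast})$, so $m$ is regulated and $m\neq j$. By the definition of $h_j$ every stage in the forward cyclic arc $(h_j,j)$ is unregulated, whereas $m$ is regulated; therefore either $m=h_j$, and then $h_j\starves j$ and we are done, or $m$ lies in the complementary arc $(j,h_j)$, so that $m,h_j,j$ occur in this cyclic order and $M_{mj}=M_{m h_j}\,M_{h_j j}$. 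In the latter case $M_{mj}<1$ because $m\starves j$, while $M_{m h_j}>1$: the stages $m$ and $h_j$ are distinct and both regulated, hence $\starves$-incomparable by (b), so $M_{m h_j}\not<1$, and $M_{m h_j}\neq 1$ by the genericity assumption (\ref{Eq.SecondGenericityAssumptionVersion1}). Therefore $M_{h_j j}=M_{mj}/M_{m h_j}<1$, i.e.\ $h_j\starves j$ and $S^{\ast}_j=bM_{h_j j}<b$; since $j$ was arbitrary, $(S^{\ast},T^{\ast})$ is moderated.

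The one step requiring genuine care is the cyclic bookkeeping in the last paragraph: one must verify that ``$m$ regulated and $m\notin(h_j,j)$'' leaves only the alternatives $m=h_j$ and $m\in(j,h_j)$, and that in the second alternative the cyclic order of $m,h_j,j$ really does give the clean factorization $M_{mj}=M_{m h_j}M_{h_j j}$; one should also note that the genericity hypothesis (\ref{Eq.SecondGenericityAssumptionVersion1}) is legitimately available there precisely because $m\neq h_j$ in that case. I expect this to be the main (and still rather mild) obstacle; everything else reduces to Proposition~\ref{prop:starves}, the population formula (\ref{Eq.MultipleStepSvalues}) at unregulated stages, and the fact that $\starves$ is a strict partial order.
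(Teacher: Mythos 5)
Your proposal is correct and takes essentially the same route as the paper: both directions rest on the follow-on formula $\S_j=bM_{h_j j}$ at unregulated stages, Proposition~\ref{prop:starves}, the existence of a $\succ$-maximal stage starving any starvable one, and the cyclic factorization $M_{mj}=M_{m h_j}M_{h_j j}$. The only difference is cosmetic: where you invoke genericity (\ref{Eq.SecondGenericityAssumptionVersion1}) plus the incomparability of distinct regulated stages to force $M_{m h_j}>1$, the paper rules out $M_{m h_j}<1$ directly because a regulated stage at a saturated point is unstarvable.
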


\begin{proof}
We start by proving that saturation implies moderation. Suppose $(S^{\ast
},T^{\ast })$ is saturated, and suppose $\pi $ is generic. If $%
\mathop{\textup
{Unreg}}(S^{\ast },T^{\ast })=\emptyset ,$ the claim holds vacuously. So
suppose $k\in \mathop{\textup {Unreg}}(S^{\ast },T^{\ast }).$ We then have $%
k\in \textup{Str}(\pi )$. Let $j$ be $\succ $-maximal such that $j\succ k$.
If $[j+1,k)\subseteq \mathop{\textup
{Unreg}}(S^{\ast },T^{\ast })$, we are done, for then $\S _{k}=\S %
_{j}M_{jk}=bM_{jk}<b$. On the other hand if $[j+1,k)\cap 
\mathop{\textup
{Reg}}(S^{\ast },T^{\ast })\neq \emptyset $, choose $m\in \lbrack j+1,k)\cap %
\mathop{\textup {Reg}}(S^{\ast },T^{\ast })$ so that $[m+1,k)\subset %
\mathop{\textup {Unreg}}(S^{\ast },T^{\ast })$. Since $M_{jk}=M_{jm}M_{mk}<1$
, we must either have $M_{mk}<1$ forcing $\S _{k}<b$, or we have $M_{jm}<1$.
But the latter is impossible since $m\in \mathop{\textup {Reg}}(S^{\ast
},T^{\ast })$ and hence $m\in \textup{Unstr}(\pi )$. Hence $(S^{\ast
},T^{\ast })$ is moderated as required.

Now we show the converse. Suppose $(S^{\ast },T^{\ast })$ is moderated. We
then have 
\begin{align*}
\mathop{\textup {Reg}}(S^{\ast },T^{\ast })& =\{j\mid \S _{j}=b\} \\
\mathop{\textup {Unreg}}(S^{\ast },T^{\ast })& =\{j\mid \S _{j}<b\}
\end{align*}
We claim that $\mathop{\textup {Unreg}}(S^{\ast },T^{\ast })\subseteq 
\textup{Str}(\pi ).$ If $\mathop{\textup {Unreg}}(S^{\ast },T^{\ast
})=\emptyset ,$ this holds vacuously. Now assume $\mathop{\textup {Unreg}}
(S^{\ast },T^{\ast })\neq \emptyset $ and let $k\in \mathop{\textup {Unreg}}
(S^{\ast },T^{\ast })$. Choose $j\in \mathop{\textup {Reg}}(S^{\ast
},T^{\ast })$ such that $[j+1,k]\subset \mathop{\textup {Unreg}}(S^{\ast
},T^{\ast })$. Then $\S _{k}=\S _{j}M_{jk}=bM_{jk}<b$ proving $j\succ k$ as
required.

We claim that $\mathop{\textup {Reg}}(S^{\ast },T^{\ast })\subseteq \textup{%
\ Unstr}(\pi )$. Suppose to the contrary that $k\in \mathop{\textup {Reg}}
(S^{\ast },T^{\ast })\cap \textup{Str}(\pi )$. Then there is $j\in \textup{%
\ Unstr}(\pi )$ such that $j\succ k$. Since $\mathop{\textup {Unreg}}%
(S^{\ast },T^{\ast })\subset \textup{Str}(\pi )$, we must have $j\in 
\mathop{\textup
{Reg}}(S^{\ast },T^{\ast })$. But if $j\in \mathop{\textup {Reg}}(S^{\ast
},T^{\ast })$ and $j\succ k$, then, by Proposition \ref{prop:starves}, $k\in %
\mathop{\textup {Unreg}}(S^{\ast },T^{\ast })$ contradicting $k\in %
\mathop{\textup {Reg}}(S^{\ast },T^{\ast })$.

Since $\mathop{\textup {Unreg}}(S^{\ast },T^{\ast })\subseteq \textup{Str}
(\pi )$ and $\mathop{\textup {Reg}}(S^{\ast },T^{\ast })\subseteq \textup{\
Unstr}(\pi )$ and $\textup{Str}(\pi ),$ $\textup{Unstr}(\pi )$ as well as $%
\mathop{\textup {Reg}}(S^{\ast },T^{\ast }),$ $\mathop{\textup {Unreg}}
(S^{\ast },T^{\ast })$ are partitions of $[0,n)$, we have $%
\mathop{\textup
{Unreg}}(S^{\ast },T^{\ast })=\textup{Str}(\pi )$ and $%
\mathop{\textup
{Reg}}(S^{\ast },T^{\ast })=\textup{Unstr}(\pi )$ proving that $(S^{\ast
},T^{\ast })$ is saturated as required. 
\end{proof}

We now justify the use of the term ``saturated''.

\begin{theorem}
\label{thm:unsaturatedUnstable}Let the set of parameters $\pi $ be generic.
Furthermore, suppose $(S^{\ast },T^{\ast })$ is a biologically meaningful
infected fixed point and assume $(S^{\ast },T^{\ast })$ is not saturated.
Then there is a $j\in \lbrack 0,n)$ so that $T_{j}^{\ast }=0$ and for any
neighborhood $U$ of $(S^{\ast },T^{\ast })$, there is a biologically
meaningful point $x\in U$ so that $\dot{T}_{j}|_{x}>0$. In particular, $%
(S^{\ast },T^{\ast })$ is unstable.
\end{theorem}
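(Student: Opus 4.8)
The plan is to exploit the characterization of non-saturated fixed points provided by Proposition~\ref{prop:saturatedEqualsModerated}: since $(S^\ast,T^\ast)$ is biologically meaningful and \emph{not} saturated, it is also \emph{not} moderated, so there exists an unregulated stage $k$ with $S_k^\ast \ge b$. Genericity~(\ref{Eq.SecondGenericityAssumptionVersion1}) rules out $S_k^\ast = b M_{h_k k} = b$, so in fact $S_k^\ast > b$ at this stage. I take $j = k$. Then $T_j^\ast = 0$ by construction, establishing the first assertion.

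For the instability, the key observation is the sign of $\dot T_j = (S_j - b)T_j$ near the fixed point. Since $S_j^\ast > b$ strictly, by continuity there is a neighborhood $U$ of $(S^\ast,T^\ast)$ on which $S_j > b$ throughout. Pick any point $x \in U$ that agrees with $(S^\ast,T^\ast)$ in all coordinates except that $T_j$ is set to a small positive value $\varepsilon > 0$ (small enough that $x \in U$ and $x$ remains biologically meaningful, which is possible since $x$ differs from $(S^\ast,T^\ast)$ only by raising a zero coordinate). At such $x$ we have $\dot T_j|_x = (S_j|_x - b)\,\varepsilon > 0$. This gives the claimed point $x \in U$ with $\dot T_j|_x > 0$.

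Finally, instability follows: the hyperplane $\{T_j = 0\}$ is invariant for the flow (since $\dot T_j = 0$ whenever $T_j = 0$), and we have just shown that on the positive side $T_j > 0$ (with the other coordinates near their fixed-point values) the $T_j$-component is increasing. Hence trajectories started arbitrarily close to $(S^\ast,T^\ast)$ with $T_j$ slightly positive move away from the fixed point in the $T_j$-direction — at minimum, they cannot converge to $(S^\ast,T^\ast)$ since $T_j$ is driven up rather than down. Thus $(S^\ast,T^\ast)$ is not asymptotically stable; in fact it is unstable in the Lyapunov sense.

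I expect the main subtlety to be purely bookkeeping rather than conceptual: one must verify that a point $x$ obtained by perturbing only the $T_j$-coordinate upward stays biologically meaningful (immediate, since the perturbation only increases a coordinate that was zero) and stays in the given neighborhood $U$ (immediate for $\varepsilon$ small), and that the strict inequality $S_j^\ast > b$ — not merely $S_j^\ast \ge b$ — genuinely holds, which is exactly where genericity assumption~(\ref{Eq.SecondGenericityAssumptionVersion1}) is used. One should also be slightly careful to state the instability conclusion at the level of ``not asymptotically stable'', or to invoke the invariant hyperplane $\{T_j=0\}$ together with the sign of $\dot T_j$ to get genuine Lyapunov instability; either way the argument is short once the existence of $j$ with $S_j^\ast > b$ is secured.
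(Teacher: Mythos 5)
Your proposal is correct and follows essentially the same route as the paper's own proof: use Proposition~\ref{prop:saturatedEqualsModerated} to pass from ``not saturated'' to ``not moderated'', extract an unregulated stage $j$ with $S_j^{\ast}>b$ (strictness via genericity), perturb only the $T_j$-coordinate upward, and read off $\dot T_j>0$ from $\dot T_j=(S_j-b)T_j$ to conclude instability. Your explicit appeal to genericity~(\ref{Eq.SecondGenericityAssumptionVersion1}) for the strict inequality and to the invariance of $\{T_j=0\}$ are fine refinements of the same argument, not a different method.
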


\begin{proof}
Since $(S^{\ast },T^{\ast })$ is not saturated, by Proposition \ref%
{prop:saturatedEqualsModerated}, it is not moderated. Furthermore, $%
\mathop{\textup {Unreg}}(S^{\ast },T^{\ast })\neq \emptyset .$ Were $%
\mathop{\textup {Unreg}}(S^{\ast },T^{\ast })=\emptyset ,$ then the fixed
point would be completely regulated (${\textup{Reg}}(S^{\ast },T^{\ast
})=[0,n)$) and such a fixed point is always saturated. To see this, assume
that $(S^{\ast },T^{\ast })$ is not saturated. Since ${\textup{Unreg}}
(S^{\ast },T^{\ast })=\emptyset $, this assumption is equivalent to $%
\textup{Str}(\pi )\neq \emptyset .$ Thus, there is a $k\in \textup{Str}
(\pi )$ which is dominated by another stage $j,$ in other words, $j\succ k.$
But $j\in {\textup{Reg}}(S^{\ast },T^{\ast })$ and therefore, by
Proposition \ref{prop:starves}, $k\in {\textup{Unreg}}(S^{\ast },T^{\ast
}), $ a contradiction.

Summarizing, there is $j\in \mathop{\textup {Unreg}}(S^{\ast },T^{\ast })$
so that $\S _{j}>b$. Let $e_{T_{j}}$ be the unit vector in the $T_{j}$
direction. Recall that $\frac{\partial G_{j}}{T_{j}}(S^{\ast },T^{\ast })=(%
\S _{j}-b)$.  Thus, for any sufficiently small $\delta >0$, we have $\dot{T}%
_{j}|_{F+\delta e_{T_{j}}}>0$. This implies that arbitrarily close to $%
(S^{\ast },T^{\ast })$, there are orbits which move away from $(S^{\ast
},T^{\ast })$. 
\end{proof}

\begin{remark}
One can think of $(S^{\ast },T^{\ast })$ as a state of a \textquotedblleft
micro-ecology\textquotedblright\ in which $T_{j}$ plays the role of a
species which is capable of invading.
\end{remark}

\begin{theorem}
\label{thm:moderatedImpliesStable} Let $(S^{\ast },T^{\ast })$ be a
biologically meaningful infected fixed point. Assume that not all $T_{j}$
are equal. If $(S^{\ast },T^{\ast })$ is moderated then $(S^{\ast },T^{\ast
})$ is a local asymptotically stable equilibrium. In particular, the
eigenvalues of the Jacobian matrix $J(S^{\ast },T^{\ast })$ have strictly
negative real part.
\end{theorem}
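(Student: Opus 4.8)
The plan is to analyze the Jacobian $J(S^*,T^*)$ directly, using the characteristic polynomial formula recorded in Proposition~\ref{prop:hereBePartials} (and proven in the Appendix). Evaluating at a moderated fixed point, we split the factors in the product according to whether $j$ is regulated or unregulated. For $j\in\unreg(S^*,T^*)$ we have $T_j^*=0$, so $\sigma_j = a_j+f_j$; for $j\in\reg(S^*,T^*)$ we have $S_j^*=b$, which makes the term $(S_j-\sigma_j-\lambda)(b+\lambda)-S_j(-(a_j+f_j)+b)$ simplify considerably (the last summand vanishes when $S_j=b$). The first thing I would do is carry out this bookkeeping to get $P(\lambda)$ in a factored-enough form that its roots can be controlled: I expect a clean product of factors of the form $(b+\lambda)$ (one for each unregulated stage), factors of the form $(a_j+f_j+\lambda)$ or similar (from unregulated stages' $S$-equations), and a residual polynomial coming from the regulated stages coupled through the $\rho_j$ terms.

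Next I would isolate that residual polynomial — call it $R(\lambda)$ — which governs the genuinely coupled dynamics among the regulated stages. Its roots are the eigenvalues that are not obviously $-b$ or $-(a_j+f_j)$. The key structural fact to exploit is the hypothesis that $(S^*,T^*)$ is moderated, equivalently (by Proposition~\ref{prop:saturatedEqualsModerated}) saturated: $\reg(S^*,T^*)=\unstr(\pi)$ and $\unreg(S^*,T^*)=\str(\pi)$. Moderation gives $S_j^* < b$ for every unregulated $j$, which should translate, after the algebra, into each "pass-through" factor $\widetilde M_\ell = M_\ell$ (for $\ell$ in an unregulated block) contributing a contraction, and hence into a coefficient-sign or Routh-type condition on $R(\lambda)$ that forces all its roots into the open left half-plane. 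I would try to set this up so that the same perturbation-of-a-Hurwitz-polynomial argument used in Proposition~\ref{prop:hereBePartials} applies: write $R = Q + E$ where $Q$ is manifestly strictly Hurwitz and $\max_{\omega\ge 0}|E(i\omega)/Q(i\omega)|<1$, the strict inequality coming precisely from moderation (the analogue of $R_0<1$ there, but now $\prod \widetilde M_\ell < 1$ over the unregulated stages between consecutive regulated ones). The hypothesis "not all $T_j$ equal" is what excludes the non-generic Lotka–Volterra situation flagged in the introduction, where $E/Q$ would have modulus exactly $1$ on the imaginary axis and purely imaginary eigenvalues appear; I would use it at exactly the point where a strict inequality is needed.

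The main obstacle I anticipate is getting the residual polynomial $R(\lambda)$ into a form where a robustness/Routh argument is actually applicable. The characteristic polynomial here is a sum of two products around the cycle (reflecting the cyclic coupling $S_{n-1}\to S_0$), not a single product, so the "perturbation of a Hurwitz polynomial" decomposition is not handed to us for free the way it was at the uninfected equilibrium; I would need to group the regulated stages and their intervening unregulated runs into effective "super-stages" so that the cycle structure is preserved but the polynomial collapses to something tractable, then verify that moderation makes the product of gains around this reduced cycle strictly less than $1$. A secondary technical point is handling the eigenvalue $\lambda=-b$ and the $-(a_j+f_j)$ roots and confirming none of them is ever $0$ or positive — routine given the standing positivity assumptions $b>0$, $a_j+f_j>0$ — and checking that the genericity assumptions already invoked (so that $T_j^*\neq 0$ for regulated $j$, and $M_{jk}\neq 1$) are exactly what keeps $R$ and $Q$ from sharing imaginary-axis roots. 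Once $R$ is in Hurwitz-perturbation form, the conclusion follows as in Proposition~\ref{prop:hereBePartials}.
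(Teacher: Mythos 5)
Your overall architecture matches the paper's: take the characteristic polynomial from the Appendix, group its factors according to ${\mathcal R}=\reg(S^{\ast},T^{\ast})$ and ${\mathcal U}=\unreg(S^{\ast},T^{\ast})$, write $P=Q+E$ with $Q$ a product that moderation makes strictly Hurwitz, and invoke the robustness criterion $\max_{\omega\ge 0}\left\vert E(i\omega)/Q(i\omega)\right\vert<1$. (Two small corrections to your bookkeeping: the factors shared between the two products, which effectively cancel, are $(\lambda+b-S_j^{\ast})$ for $j\in{\mathcal U}$, not $(b+\lambda)$; and the unregulated quadratics in $Q$ factor as $(\lambda+a_j+f_j)(\lambda+b-S_j^{\ast})$, which is exactly where $S_j^{\ast}<b$ is used.) The genuine gap is in where you locate the key strict inequality. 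You propose that moderation yields a sub-unity loop gain --- ``the analogue of $R_0<1$, but now $\prod\widetilde M_\ell<1$ over the unregulated stages between consecutive regulated ones.'' That step would fail: at any infected fixed point, item 5 of Proposition~\ref{prop:R0} gives $\prod_j\widetilde M_j=1$ exactly, i.e.\ $\prod_j r_jf_j=\prod_j\sigma_j$ with $\sigma_j=a_j+f_j+T_j^{\ast}$, so after the cancellation the natural bound on $\left\vert E(i\omega)/Q(i\omega)\right\vert$ is exactly $1$, not less; the moderation inequalities $M_{h_j j}<1$ never enter the estimate at all (the quantities $b-S_j^{\ast}$ drop out of $\left\vert E/Q\right\vert$ entirely).

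What actually closes the argument is the frequency-domain structure of the denominator: each regulated factor contributes $\sqrt{(T_j^{\ast}b-\omega^2)^2/\omega^2+\sigma_j^2}\ge\sigma_j$, with equality only at $\omega=\sqrt{T_j^{\ast}b}$, while each unregulated factor contributes $\sqrt{(a_j+f_j)^2+\omega^2}\ge a_j+f_j$, with equality only at $\omega=0$. Because not all $T_j^{\ast}$ are equal, these minima are never attained simultaneously, so the supremum of $\left\vert E/Q\right\vert$ stays strictly below the value $1$ fixed by Proposition~\ref{prop:R0}(5). You do flag the ``not all $T_j$ equal'' hypothesis as the source of strictness in the Lotka--Volterra boundary case, but as written your plan rests the inequality on a moderation-induced contraction that is not available; without the fixed-point identity $\prod_j r_jf_j=\prod_j\sigma_j$ and the non-simultaneity of the minima, the estimate does not close. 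Moderation's role is solely to make $Q$ strictly Hurwitz (and it is indispensable there: if some unregulated $S_j^{\ast}$ exceeded $b$, $Q$ would have a right half-plane root and the perturbation theorem could not be applied).
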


\begin{proof}
For brevity, we will use the notations ${{\mathcal{R}}}={\ 
\mathop{\textup
{Reg}}}(S^{\ast },T^{\ast })$, ${{\mathcal{U}}}=\mathop{\textup {Unreg}}
(S^{\ast },T^{\ast })$, $\rho _{j}=r_{j}f_{j}$ and $\sigma
_{j}=a_{j}+f_{j}+T_{j}^{\ast }$ (thus, $\sigma _{j}=a_{j}+f_{j}$ for all $%
j\in {\textup{Unreg}}(S^{\ast },T^{\ast })$). The characteristic polynomial
of $J(S^{\ast },T^{\ast })$ is given by 
\begin{equation*}
P(\lambda )=\pm \prod\limits_{j=0}^{n-1}\left( (S_{j}^{\ast }-\sigma
_{j}-\lambda )(b+\lambda )-S_{j}^{\ast }(-(a_{j}+f_{j})+b)\right)
-\prod\limits_{j=1}^{n}\left( \rho _{j-1}(b+\lambda -S_{j}^{\ast })\right)
\end{equation*}
By grouping the factors containing regulated resp. unregulated stages we
obtain 
\begin{eqnarray*}
P(\lambda ) &=&\pm \prod\limits_{j\in {{\mathcal{R}}}}\left( \lambda
^{2}+\sigma _{j}\lambda +T_{j}^{\ast }b\right) \prod\limits_{j\in {\mathcal{%
U }}}\left( \lambda ^{2}+(a_{j}+f_{j}+b-S_{j}^{\ast })\lambda
+(a_{j}+f_{j})(b-S_{j}^{\ast })\right) \\
&&-\lambda ^{\left\vert {{\mathcal{R}}}\right\vert }\prod\limits_{j\in {\ {\ 
\mathcal{R}}}}\left( \rho _{j-1}\right) \prod\limits_{j\in {{\mathcal{U}}}
}\left( \rho _{j-1}(\lambda +b-S_{j}^{\ast })\right)
\end{eqnarray*}
As in Section \ref{sec:R0}, we consider the polynomials 
\begin{eqnarray}
Q(\lambda ) &:&=\pm \prod\limits_{j\in {{\mathcal{R}}}}\left( \lambda
^{2}+\sigma _{j}\lambda +T_{j}^{\ast }b\right) \prod\limits_{j\in {\mathcal{%
U }}}\left( \lambda ^{2}+(a_{j}+f_{j}+b-S_{j}^{\ast })\lambda
+(a_{j}+f_{j})(b-S_{j}^{\ast })\right)  \label{eq:EQ} \\
E(\lambda ) &:&=-\lambda ^{\left\vert {{\mathcal{R}}}\right\vert
}\prod\limits_{j\in {{\mathcal{R}}}}\left( \rho _{j-1}\right)
\prod\limits_{j\in {{\mathcal{U}}}}\left( \rho _{j-1}(\lambda +b-S_{j}^{\ast
})\right) .  \notag
\end{eqnarray}
Thus, $P=Q+E,$ and we can interpret $P$ as a perturbation of $Q$ by $E.$ By
our assumptions on the parameters, by the positivity of all $T_{j}^{\ast }$
for $j\in {\mathcal{R}}$, and by the assumption $S_{j}^{\ast }<b$ for all $%
j\in {{\mathcal{U}}}$, $Q$ is strictly Hurwitz (this is also correct in the
case ${{\mathcal{U}}} =\emptyset $). Here the question arises as to how big
the perturbation $E$ can be, so that $P$ remains strictly Hurwitz. By
Corollary 4 in \cite{Lin'sPaperOnRobustness}, the strict Hurwitz property
is preserved if $\max \left( \left\vert E(i\omega )/Q(i\omega )\right\vert
\right) <1$. \footnote{%
It is easy to verify that $\left\vert E(0)/Q(0)\right\vert =0$ and $%
\lim_{\omega \rightarrow \infty }\left\vert E(i\omega )/Q(i\omega
)\right\vert =0.$ Notice that $Q(i\omega )$ is non-zero on $(0,\infty )$.
Since $h(\omega ):=\left\vert E(i\omega )/Q(i\omega )\right\vert $ is a
non-negative continuous function on $[0,\infty )$ (even differentiable on $%
(0,\infty )$) and obviously not constant, it must take its maximum in the
open interval $(0,\infty ).$} We have 
\begin{eqnarray*}
\left\vert E(i\omega )/Q(i\omega )\right\vert &=&\frac{\left\vert (i\omega
)^{\left\vert {{\mathcal{R}}}\right\vert }\prod\limits_{j\in {{\mathcal{R}}}
}\left( \rho _{j-1}\right) \prod\limits_{j\in {{\mathcal{U}}}}\left( \rho
_{j-1}(i\omega +b-S_{j}^{\ast })\right) \right\vert }{\left\vert
\prod\limits_{j\in {{\mathcal{R}}}}\left( (i\omega )^{2}+i\sigma _{j}\omega
+T_{j}^{\ast }b\right) \prod\limits_{j\in {{\mathcal{U}}}}\left( (i\omega
)^{2}+i(a_{j}+f_{j}+b-S_{j}^{\ast })\omega +(a_{j}+f_{j})(b-S_{j}^{\ast
})\right) \right\vert } \\
&=&\frac{\omega ^{\left\vert {{\mathcal{R}}}\right\vert
}\prod\limits_{j=0}^{n-1}\left( \rho _{j}\right) \prod\limits_{j\in {\ {\ 
\mathcal{U}}}}\sqrt{\omega ^{2}+(b-S_{j}^{\ast })^{2}}}{\prod\limits_{j\in {%
\ \ {\mathcal{R}}}}\sqrt{\left( T_{j}^{\ast }b-\omega ^{2}\right)
^{2}+\sigma _{j}^{2}\omega ^{2}}\prod\limits_{j\in {{\mathcal{U}}}}\sqrt{%
\left( (a_{j}+f_{j})(b-S_{j}^{\ast })-\omega ^{2}\right)
^{2}+(a_{j}+f_{j}+b-S_{j}^{\ast })^{2}\omega ^{2}}} \\
&=&\frac{\omega ^{\left\vert {{\mathcal{R}}}\right\vert
}\prod\limits_{j=0}^{n-1}\left( \rho _{j}\right) \prod\limits_{j\in {\ {\ 
\mathcal{U}}}}\sqrt{\omega ^{2}+(b-S_{j}^{\ast })^{2}}}{\prod\limits_{j\in {%
\ \ {\mathcal{R}}}}\omega \sqrt{\left( T_{j}^{\ast }b-\omega ^{2}\right)
^{2}/\omega ^{2}+\sigma _{j}^{2}}\prod\limits_{j\in {{\mathcal{U}}}}\sqrt{
((b-S_{j}^{\ast })^{2}+\omega ^{2})((a_{j}+f_{j})^{2}+\omega ^{2})}} \\
&=&\frac{\prod\limits_{j=0}^{n-1}\left( \rho _{j}\right) }{
\prod\limits_{j\in {{\mathcal{R}}}}\sqrt{\left( T_{j}^{\ast }b-\omega
^{2}\right) ^{2}/\omega ^{2}+\sigma _{j}^{2}}\prod\limits_{j\in {{\mathcal{U}
}}}\sqrt{(a_{j}+f_{j})^{2}+\omega ^{2}}}
\end{eqnarray*}
Each of the functions $h_{j}(\omega ):=\left( T_{j}^{\ast }b-\omega
^{2}\right) ^{2}/\omega ^{2},$ $j\in {{\mathcal{R}}}$ has a global minimum
on $(0,\infty )$ at $\omega =\sqrt{T_{j}^{\ast }b}>0$ whereas the function $%
g(\omega ):=(a_{j}+f_{j})^{2}+\omega ^{2}$ has a global minimum on $%
[0,\infty )$ at $\omega =0.$ Therefore, we can conclude that for $%
\max_{\omega \geq 0}\left\vert E(i\omega )/Q(i\omega )\right\vert $, if
existent, it must hold 
\begin{equation*}
\max_{\omega \geq 0}\left\vert E(i\omega )/Q(i\omega )\right\vert <\frac{
\prod\limits_{j=0}^{n-1}\left( \rho _{j}\right) }{\prod\limits_{j\in {\ {\ 
\mathcal{R}}}}\left( \sigma _{j}\right) \prod\limits_{j\in {{\mathcal{U}}}
}(a_{j}+f_{j})}
\end{equation*}
This is also correct in the case ${{\mathcal{U}}}=\emptyset ,$ in which $%
\left\vert E(i\omega )/Q(i\omega )\right\vert
=\prod\limits_{j=0}^{n-1}\left( \rho _{j}\right) /\prod\limits_{j\in {\ {\ 
\mathcal{R}}}}\sqrt{\left( T_{j}^{\ast }b-\omega ^{2}\right) ^{2}/\omega
^{2}+\sigma _{j}^{2}}$ and $\prod\nolimits_{j\in {{\mathcal{U}}}
}(a_{j}+f_{j})=1.$ The strict inequality arises from the fact that not all $%
T_{j}^{\ast }$ are equal (and consequently not all $\sqrt{T_{j}^{\ast }b}$
are equal) and thus, the functions $h_{j}(\omega )$ do not achieve their
minima simultaneously. By 5. in Proposition \ref{prop:R0}, it follows $%
\max_{\omega \geq 0}\left\vert E(i\omega )/Q(i\omega )\right\vert <1$ and
the claim follows by Corollary 4 in \cite{Lin'sPaperOnRobustness}.
\end{proof}

\begin{corollary}
Let the set of parameters $\pi $ be generic. Furthermore, let $(S^{\ast
},T^{\ast })$ be a biologically meaningful infected fixed point. Assume that
not all $T_{j}$ are equal. Then $(S^{\ast },T^{\ast })$ is locally
asymptotically stable if and only if $(S^{\ast },T^{\ast })$ is moderated.
\end{corollary}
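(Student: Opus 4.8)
The plan is to assemble this corollary directly from the three results immediately preceding it; there is essentially no new argument to make, only a verification that the hypotheses line up.

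For the ``if'' direction (moderated $\Rightarrow$ locally asymptotically stable), I would simply invoke Theorem~\ref{thm:moderatedImpliesStable}: since $(S^{\ast},T^{\ast})$ is a biologically meaningful infected fixed point with not all $T_{j}$ equal, and since it is moderated, that theorem gives local asymptotic stability outright (indeed, with all eigenvalues of $J(S^{\ast},T^{\ast})$ having strictly negative real part). Note that genericity of $\pi$ is not even needed for this implication.

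For the ``only if'' direction I would argue by contraposition. Suppose $(S^{\ast},T^{\ast})$ is not moderated. Since $\pi$ is generic, Proposition~\ref{prop:saturatedEqualsModerated} then shows $(S^{\ast},T^{\ast})$ is not saturated. Applying Theorem~\ref{thm:unsaturatedUnstable} (again invoking genericity), there is a stage $j$ with $T_{j}^{\ast}=0$ such that every neighborhood of $(S^{\ast},T^{\ast})$ contains a biologically meaningful point $x$ with $\dot{T}_{j}|_{x}>0$; hence $(S^{\ast},T^{\ast})$ is unstable, and in particular not locally asymptotically stable. This establishes the contrapositive and completes the equivalence.

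The only point requiring care --- the ``main obstacle'', such as it is --- is bookkeeping of hypotheses. The assumption that not all $T_{j}$ are equal is consumed only in the moderated $\Rightarrow$ stable direction, where it is precisely what makes the perturbation estimate in Theorem~\ref{thm:moderatedImpliesStable} strict; genericity of $\pi$ is what licenses both Proposition~\ref{prop:saturatedEqualsModerated} and Theorem~\ref{thm:unsaturatedUnstable} in the other direction. One should confirm that ``biologically meaningful infected fixed point'' is exactly the standing hypothesis of all three cited statements, which it is, so no further conditions are introduced.
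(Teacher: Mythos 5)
Your proposal is correct and matches the paper's proof, which likewise cites Proposition~\ref{prop:saturatedEqualsModerated} together with Theorems~\ref{thm:unsaturatedUnstable} and~\ref{thm:moderatedImpliesStable}; you have merely spelled out the hypothesis bookkeeping that the paper leaves implicit.
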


\begin{proof}
The claim follows immediately from Proposition \ref%
{prop:saturatedEqualsModerated} and Theorems \ref{thm:unsaturatedUnstable}
and \ref{thm:moderatedImpliesStable}.
\end{proof}

The previous Corollary can be restated as follows.

\begin{theorem}
\label{thm:saturatedImpliesStable}Let the set of parameters $\pi $ be
generic. Furthermore, let $(S^{\ast },T^{\ast })$ be a biologically
meaningful infected fixed point. Then $(S^{\ast },T^{\ast })$ is locally
asymptotically stable if and only if $(S^{\ast },T^{\ast })$ is saturated.
\end{theorem}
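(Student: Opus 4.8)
The plan is to obtain Theorem~\ref{thm:saturatedImpliesStable} as an immediate consequence of the Corollary preceding it together with Proposition~\ref{prop:saturatedEqualsModerated}. Proposition~\ref{prop:saturatedEqualsModerated} already identifies moderation with saturation for generic $\pi$ at a biologically meaningful infected fixed point, so the Corollary says exactly that such a fixed point is locally asymptotically stable if and only if it is saturated, \emph{provided} not all of the $T_j^*$ coincide. Thus the only gap to close is to show that, for generic $\pi$, this proviso is automatic, i.e.\ that no biologically meaningful infected fixed point has all its $T_j^*$ equal.

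I would argue this by contradiction. Suppose $(S^*,T^*)$ is such a fixed point with all $T_j^*$ equal to a common value $\tau$. Since the fixed point is infected, $\textup{Reg}(S^*,T^*)\neq\emptyset$, so $\tau\neq 0$ and every stage is regulated. Then observation 2 of Section~\ref{sec:FixedPoints} gives $S_j^*=b$ for all $j$, and observation 8 gives $\tau = T_j^* = r_{j-1}f_{j-1}-(a_j+f_j)$ for every $j\in[0,n)$. Hence the $n$ cyclically indexed quantities $r_{j-1}f_{j-1}-(a_j+f_j)$ are all equal --- a system of $n-1$ equations on $\pi$ (the remaining wrap-around equality being automatic) whose differentials are everywhere linearly independent, as one sees by differentiating the $j$-th of them in $a_j$ and $a_{j+1}$ to get the vectors $-e_j+e_{j+1}$. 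So these equations cut out a set of codimension $n-1\ge 1$, namely the Lotka--Volterra locus of Section~\ref{sec:Lotka-Volterra}, which has measure zero; a generic $\pi$ avoids it.

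Granting this, the Corollary applies with no further hypothesis: for generic $\pi$, a biologically meaningful infected fixed point is locally asymptotically stable iff it is moderated, and by Proposition~\ref{prop:saturatedEqualsModerated} this is equivalent to its being saturated. I do not expect a real obstacle here. The only delicate point is bookkeeping: ``generic'' must be understood to exclude the measure-zero Lotka--Volterra locus --- something already flagged in the introduction and developed in Section~\ref{sec:Lotka-Volterra} --- and one could, if preferred, fold this condition formally into the genericity assumptions alongside~(\ref{Eq.FirstGenericityAssumption}) and~(\ref{Eq.SecondGenericityAssumptionVersion1}).
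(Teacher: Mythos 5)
Your proposal is correct and follows essentially the same route as the paper: the paper's proof likewise reduces the theorem to the preceding Corollary via Proposition~\ref{prop:saturatedEqualsModerated} and then notes that for generic $\pi$ the $T_j^*$ cannot all be equal, since (as in Remark~\ref{rem:ThirdGenericityAssumption}) this would force $r_{j-1}f_{j-1}-(a_j+f_j)$ to be independent of $j$, a co-dimension $n-1$ (measure-zero) condition. Your write-up merely makes explicit the step that equal $T_j^*$ at an infected equilibrium forces complete regulation and the linear-independence argument behind the codimension count, which the paper leaves implicit.
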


\begin{proof}
Under the assumption that $\pi $ is generic, $(S^{\ast },T^{\ast })$ is
saturated if and only if it is moderated. Note that for a generic parameter
set $\pi $, the $T_{j}^{\ast }$ are not all equal as this occurs only on a
subset of parameters of co-dimension $n-1$. 
\end{proof}

\begin{remark}
\label{rem:ThirdGenericityAssumption} We have introduced a third requirement
for $\pi $ to be generic, namely that $T_{j}^{\ast }$ are not all equal. As
we will see below, this is equivalent to the requirement that $%
r_{j-1}f_{j-1}-(a_{j}+f_{j})$ is not independent of $j$.
\end{remark}

\begin{definition}
Let $(S^{\ast },T^{\ast })$ be a fixed point. If $\mathop{\textup {Reg}}
(S^{\ast },T^{\ast })=[0,n)$, we will say that $(S^{\ast },T^{\ast })$ is 
\emph{completely regulated}.
\end{definition}

\begin{remark}
If $(S^{\ast },T^{\ast })$ is completely regulated, it follows that $%
(S^{\ast },T^{\ast })$ is infected. It is also vacuously moderated. If $\pi $
is generic and $(S^{\ast },T^{\ast })$ biologically meaningful, it is also
saturated (see the Proof of Theorem \ref{thm:unsaturatedUnstable}). On the
other hand, if $\pi $ is generic and such that no two stages are comparable,
then the completely regulated (not necessarily biologically meaningful)
fixed point is saturated (cf. Remark \ref{NoComparisonNoStarvation}).
\end{remark}

\begin{corollary}
\label{Cor.CompletelyRegulatedFPisAlwaysStable} Let $(S^{\ast },T^{\ast })$
be the completely regulated biologically meaningful fixed point. Suppose
that the $T_{j}^{\ast }$ are not all equal. Then $(S^{\ast },T^{\ast })$ is
locally asymptotically stable and the eigenvalues of the Jacobian matrix $%
J(S^{\ast },T^{\ast })$ have strictly negative real part.
\end{corollary}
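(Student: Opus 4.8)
The plan is to derive this as an immediate consequence of Theorem~\ref{thm:moderatedImpliesStable}. First I would observe that a completely regulated fixed point has $\mathop{\textup{Reg}}(S^{\ast},T^{\ast})=[0,n)$, hence $\mathop{\textup{Unreg}}(S^{\ast},T^{\ast})=\emptyset$, so the condition ``$\S_j<b$ for $j\in\mathop{\textup{Unreg}}(S^{\ast},T^{\ast})$'' in Definition~\ref{def:saturatedModerated} holds vacuously; that is, $(S^{\ast},T^{\ast})$ is moderated. Next, since $(S^{\ast},T^{\ast})$ is assumed biologically meaningful, it is in particular an infected fixed point (a completely regulated fixed point cannot be $(\vec 0,\vec 0)$, since by observation 2 of Section~\ref{sec:FixedPoints} we have $\S_j=b\ne 0$ for all $j$). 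Thus the hypotheses of Theorem~\ref{thm:moderatedImpliesStable} — biologically meaningful infected fixed point, not all $T_j^{\ast}$ equal, and moderated — are all satisfied.

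Having verified the hypotheses, I would simply invoke Theorem~\ref{thm:moderatedImpliesStable} to conclude that $(S^{\ast},T^{\ast})$ is a locally asymptotically stable equilibrium and that the eigenvalues of $J(S^{\ast},T^{\ast})$ have strictly negative real part, which is exactly the assertion of the corollary. No new computation is required.

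There is essentially no obstacle here; the only thing to be careful about is making sure each hypothesis of Theorem~\ref{thm:moderatedImpliesStable} is genuinely in force. The ``not all $T_j^{\ast}$ equal'' hypothesis is carried over verbatim as an assumption of the corollary, so nothing needs to be checked there. The subtle point worth a sentence is why ``completely regulated and biologically meaningful'' forces ``infected'': this is because $\S_j=b>0$ for every $j$ at such a point, ruling out the uninfected equilibrium, so the fixed point is indeed infected. With that remark in place the proof is one line.

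\begin{proof}
Since $(S^{\ast},T^{\ast})$ is completely regulated, $\mathop{\textup{Reg}}(S^{\ast},T^{\ast})=[0,n)$ and $\mathop{\textup{Unreg}}(S^{\ast},T^{\ast})=\emptyset$, so $(S^{\ast},T^{\ast})$ is vacuously moderated. Moreover, $\S_j=b>0$ for all $j$, so $(S^{\ast},T^{\ast})\ne(\vec 0,\vec 0)$ and hence $(S^{\ast},T^{\ast})$ is an infected fixed point, which is biologically meaningful by hypothesis. Since the $T_j^{\ast}$ are assumed not all equal, Theorem~\ref{thm:moderatedImpliesStable} applies and yields that $(S^{\ast},T^{\ast})$ is locally asymptotically stable and that the eigenvalues of $J(S^{\ast},T^{\ast})$ have strictly negative real part.
\end{proof}
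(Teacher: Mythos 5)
Your proof is correct and is essentially the paper's own argument: the paper's proof is the single line ``since $(S^{\ast},T^{\ast})$ is completely regulated, it is moderated,'' with the appeal to Theorem~\ref{thm:moderatedImpliesStable} left implicit (and the fact that complete regulation forces the point to be infected is noted in the remark just before the corollary). Your version merely spells out these same steps in more detail, which is fine.
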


\begin{proof}
Since $(S^{\ast },T^{\ast })$ is completely regulated, $(S^{\ast },T^{\ast
}) $ is moderated. 
\end{proof}

\begin{remark}
Note that if we drop the assumption that not all the $T_{j}^{\ast }$ are
equal, we are forced into the case $T_{j}^{\ast }=\tau $ for all $j\in
\lbrack 0,n)$ for some $\tau \in \mathbb{R}^{+}$ and consequently $%
\mathop{\textup {Unreg}}(S^{\ast },T^{\ast })=\emptyset $. In this case, 
\begin{equation*}
\max_{\omega \geq 0}\left\vert E(i\omega )/Q(i\omega )\right\vert
=\max_{\omega \geq 0}\left( \frac{\prod\limits_{j=0}^{n-1}\left(
p_{j}\right) }{\prod\limits_{j=0}^{n-1}\sqrt{\left( \tau b-\omega
^{2}\right) ^{2}/\omega ^{2}+s_{j}^{2}}}\right) =\frac{\prod
\limits_{j=0}^{n-1}\left( p_{j}\right) }{\prod
\limits_{j=0}^{n-1}(a_{j}+f_{j}+T_{j}^{\ast })}
\end{equation*}
and by Proposition \ref{prop:R0}, $\max_{\omega \geq 0}\left\vert E(i\omega
)/Q(i\omega )\right\vert =1.$ Corollary 4 in \cite{Lin'sPaperOnRobustness}
and the continuity of the eigenvalues on the entries of a matrix let us
conclude that $\textup{Re}(\lambda _{k})\leq 0$ for all roots $\lambda _{k}$
of the characteristic polynomial $P.$ In this very non-generic case,
linearization of the right hand side of (\ref{Eq.BasicModel}) does not yield
a stability statement for $(S^{\ast },T^{\ast })$. In the next subsection we
will study this case in more detail.
\end{remark}

\section{An embedded Lotka-Volterra system}

\label{sec:Lotka-Volterra}

For certain highly non-generic sets of parameters, this system contains an
embedded Lotka-Volterra predator-prey system. Recall that this system is
given by 
\begin{align*}
\dot{x}& =(\alpha -y)x \\
\dot{y}& =(x-\beta )y
\end{align*}
where $\alpha ,\beta \in \mathbb{R}^{+}$ are positive parameters and $x$ and 
$y$ are non-negative. (See, e.g., \cite{hirschSmale}.) We will use the
notation $\Delta _{S}\times \Delta _{T} $ to denote points $(S_{0},\dots
,S_{n-1},T_{0},\dots ,T_{n-1})$ where the values $S_{j}$ and $T_{j}$ are
independent of $j$, that is, points satisfying $S_{0}=S_{1}=\dots =S_{n-1}$
and $T_{0}=T_{1}=\dots =T_{n-1}.$

\begin{theorem}
\label{thm:Lotka-Volterra}Suppose $(S^{\ast },T^{\ast })$ is the completely
regulated infected fixed point of (\ref{Eq.BasicModel}). Then the following
are equivalent:

\begin{enumerate}
\item \label{a}$T_{j}^{\ast }$ is independent of $j$.

\item \label{b}$(S^{\ast },T^{\ast })\in \Delta _{S}\times \Delta _{T}$.

\item \label{c}The expression $r_{j-1}f_{j-1}-(a_{j}+f_{j})$ is positive and
independent of $j$.

\item \label{d}$\Delta _{S}\times \Delta _{T}$ is invariant and the dynamics
of the system restricted to $\Delta _{S}\times \Delta _{T}$ is the
Lotka-Volterra system.

\item \label{e}There is an embedded Lotka-Volterra system.

\item \label{f}The Jacobian of the system has a pair of pure imaginary
eigenvalues.
\end{enumerate}
\end{theorem}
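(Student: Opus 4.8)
The plan is to establish the equivalences by a cycle plus a few direct implications, keeping in mind that $(S^{\ast},T^{\ast})$ is the completely regulated fixed point so that $S_j^{\ast}=b$ for all $j$ by observation 2, and $T_j^{\ast}=r_{j-1}f_{j-1}M_{h_j,j-1}-(a_j+f_j)$; since every stage is regulated, $h_j=j$, so $M_{h_j,j-1}=M_{j,j-1}$ is the empty product equal to $1$ only when... more carefully, $[h_j+1,j)=\emptyset$, hence $S_{j-1}^{\ast}=b$ and observation 8 gives $T_j^{\ast}=\frac{r_{j-1}f_{j-1}}{b}\cdot b-(a_j+f_j)=r_{j-1}f_{j-1}-(a_j+f_j)$. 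This single identity immediately yields \eqref{a} $\Leftrightarrow$ \eqref{c} modulo positivity, which holds because $(S^{\ast},T^{\ast})$ is biologically meaningful (infected), so each $T_j^{\ast}>0$. It also gives \eqref{a} $\Leftrightarrow$ \eqref{b}: the $S$-coordinates are already all equal to $b$, so $(S^{\ast},T^{\ast})\in\Delta_S\times\Delta_T$ iff the $T_j^{\ast}$ agree.

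Next I would treat the dynamical equivalences. For \eqref{b} $\Rightarrow$ \eqref{d}: on $\Delta_S\times\Delta_T$ write $S_j=x$, $T_j=y$ for all $j$; then from \eqref{Eq.BasicModel}, $\dot S_j=r_{j-1}f_{j-1}x-(a_j+f_j)x-xy$ and $\dot T_j=(x-b)y$. The $T$-equations already have the Lotka--Volterra form with $\beta=b$. Invariance of $\Delta_S\times\Delta_T$ requires $\dot S_j$ to be independent of $j$, i.e. $r_{j-1}f_{j-1}-(a_j+f_j)$ independent of $j$ — which is exactly \eqref{c}, already known equivalent to \eqref{b}. Calling that common value $\alpha>0$, the restricted $S$-equation is $\dot x=(\alpha-y)x$, so the restricted dynamics is precisely the Lotka--Volterra system. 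Then \eqref{d} $\Rightarrow$ \eqref{e} is trivial (the embedded system is the restriction to the invariant plane), and \eqref{e} $\Rightarrow$ \eqref{f} follows because the linearization of a Lotka--Volterra system at its interior fixed point $(\beta,\alpha)$ has eigenvalues $\pm i\sqrt{\alpha\beta}$, pure imaginary; since $\Delta_S\times\Delta_T$ is invariant, these are eigenvalues of the full Jacobian $J(S^{\ast},T^{\ast})$ as well (the tangent space to the invariant plane is $J$-invariant).

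It remains to close the loop, and this is where I expect the only real work: \eqref{f} $\Rightarrow$ \eqref{a}. I would use the factored characteristic polynomial from the proof of Theorem~\ref{thm:moderatedImpliesStable}, which in the completely regulated case ($\mathcal U=\emptyset$) reads $P(\lambda)=\pm\prod_{j\in[0,n)}(\lambda^2+\sigma_j\lambda+T_j^{\ast}b)-\lambda^n\prod_{j\in[0,n)}\rho_{j-1}$ with $\sigma_j=a_j+f_j+T_j^{\ast}$ and $\rho_j=r_jf_j$. Suppose $P(i\omega)=0$ for some real $\omega\neq 0$ (a pure imaginary eigenvalue). Taking absolute values gives $\prod_j|(i\omega)^2+i\sigma_j\omega+T_j^{\ast}b|=\omega^n\prod_j\rho_{j-1}=\omega^n\prod_j\rho_j$, i.e. $\prod_j\sqrt{(T_j^{\ast}b-\omega^2)^2/\omega^2+\sigma_j^2}=\prod_j\rho_j$. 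But by the reproductive-number identity (5.\ in Proposition~\ref{prop:R0}), at this infected fixed point $\prod_j\widetilde M_j=1$, which with $\widetilde M_j=\rho_j/\sigma_{j+1}$ gives $\prod_j\rho_j=\prod_j\sigma_j$. Hence $\prod_j\sqrt{(T_j^{\ast}b-\omega^2)^2/\omega^2+\sigma_j^2}=\prod_j\sigma_j=\prod_j\sqrt{0+\sigma_j^2}$. Since each factor on the left is $\geq|\sigma_j|$ with equality iff $\omega^2=T_j^{\ast}b$, equality of the products forces $\omega^2=T_j^{\ast}b$ for every $j$, hence $T_j^{\ast}=\omega^2/b$ is independent of $j$, which is \eqref{a}. (One should also check the argument, not just the modulus, is consistent — but the modulus computation already pins down $\omega$ uniquely and forces \eqref{a}, and conversely \eqref{a}$\Rightarrow$\eqref{f} is the Lotka--Volterra eigenvalue computation above, so no separate verification is needed.) This completes the cycle $\eqref{a}\Leftrightarrow\eqref{b}\Leftrightarrow\eqref{c}$, $\eqref{b}\Rightarrow\eqref{d}\Rightarrow\eqref{e}\Rightarrow\eqref{f}\Rightarrow\eqref{a}$, establishing all six equivalences.
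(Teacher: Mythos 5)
Your proposal is correct, and for the chain (\ref{a})$\Leftrightarrow$(\ref{b})$\Leftrightarrow$(\ref{c})$\Rightarrow$(\ref{d})$\Rightarrow$(\ref{e})$\Rightarrow$(\ref{f}) it is essentially the paper's argument (modulo the momentary slip $h_j=j$, which should be $h_j=j-1$; you immediately recover the correct identity $T_j^{\ast}=r_{j-1}f_{j-1}-(a_j+f_j)$ via observation 8, so nothing is lost). The genuine difference is the closing implication (\ref{f})$\Rightarrow$(\ref{a}): the paper simply cites Corollary~\ref{Cor.CompletelyRegulatedFPisAlwaysStable} --- if the $T_j^{\ast}$ are not all equal, all eigenvalues of $J(S^{\ast},T^{\ast})$ have strictly negative real part, so a pure imaginary pair forces equality --- whereas you inline and sharpen the underlying computation: evaluating the factored characteristic polynomial at $\lambda=i\omega$, taking moduli, and using $\prod_j\rho_j=\prod_j\sigma_j$ (which at the completely regulated fixed point is immediate from $\sigma_j=\rho_{j-1}$, so you do not even strictly need Proposition~\ref{prop:R0}(5) or biological meaningfulness at that step), you force $\omega^2=T_j^{\ast}b$ for every $j$. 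This buys a self-contained argument that also identifies the imaginary pair as $\pm i\sqrt{T^{\ast}b}$ and yields positivity of the common $T^{\ast}$ for free, at the cost of repeating work already embedded in the proof of Theorem~\ref{thm:moderatedImpliesStable}; the paper's route is shorter but leans on that corollary. Your observation that the tangent plane to the invariant plane is $J$-invariant also makes (\ref{e})$\Rightarrow$(\ref{f}) (really (\ref{d})$\Rightarrow$(\ref{f})) more explicit than the paper's appeal to the well-known Lotka--Volterra dynamics. One shared caveat: the positivity asserted in (\ref{c}) genuinely requires the completely regulated fixed point to be biologically meaningful, an assumption you state explicitly and the paper leaves implicit.
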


\begin{proof}
~

\textit{\ref{a}.} $\implies$ \textit{\ref{b}.}: We have $T^*_j\ne 0$ for $%
0\le j < n$, so $\S _j = b$ for $0\le j < n$.

\textit{\ref{b}.} $\implies $ \textit{\ref{c}.}: Recall that 
\begin{equation*}
T_{j}^{\ast }=\frac{r_{j-1}f_{j-1}}{b}\S _{j-1}-(a_{j}+f_{j}).
\end{equation*}
Since $\S _{j}=b$ for $0\leq j<n$\ and $T_{j}^{\ast }$ is independent of $j$
, the result follows.

\textit{\ref{c}.} $\implies $ \textit{\ref{d}.}: We take $S_{j}=x$, $T_{j}=y$
on $\Delta _{S}\times \Delta _{T}$, $r_{j-1}f_{j-1}-(a_{j}+f_{j})=\alpha $
and $b=\beta $. We then have 
\begin{align*}
\dot{S}_{j}& =r_{j-1}f_{j-1}S_{j-1}-(a_{j}+f_{j}+T_{j})S_{j} \\
& =(\alpha -y)x \\
\dot{T}_{j}& =(S_{j}-b)T_{j} \\
& =(x-\beta )y
\end{align*}
Since the right-hand sides are independent of $j$, we may replace the
left-hand sides with the derivatives of $x$ and $y$ giving the
Lotka-Volterra equations.

\textit{\ref{d}.} immediately implies \textit{\ref{e}.}

\textit{\ref{f}.} now follows from the well-known dynamics of the
Lotka-Volterra equations.

It remains to prove that \textit{\ref{f}.} implies \textit{\ref{a}.} This
follows from Corollary~\ref{Cor.CompletelyRegulatedFPisAlwaysStable}. 
\end{proof}

\begin{remark}
The Lotka-Volterra dynamics take place on an embedded 2-plane. Corollary~\ref%
{Cor.CompletelyRegulatedFPisAlwaysStable} says that the eigenvalues of the
Jacobian have non-positive real parts. This implies that there is no
exponential departure from the fixed point orthogonal to the Lotka-Volterra
plane. We conjecture that the pure imaginary eigenvalues of the
Lotka-Volterra dynamics are the only ones with non-negative real part. If
this conjecture is correct, the Lotka-Volterra plane is an attractor in a
neighborhood of the fixed point. This conjecture is consistent with the
results of numerical investigations.
\end{remark}

\section{Self-establishing stages}

\label{sec:selfEstablishingStages}

In this section we show how to modify Definition~\ref{def:starves} to
generalize Theorems \ref{thm:unsaturatedUnstable}, \ref%
{thm:moderatedImpliesStable} and \ref{thm:saturatedImpliesStable} in the
case where there exists a $j\in \lbrack 0,n)$ such that $a_{j}+f_{j}<0$.

In previous sections, we have assumed that $a_{j}+f_{j}>0$ for all $j\in
\lbrack 0,n)$. The assumption that $a_{j}>0$ says that stage $j$ decays
rather than proliferates. If $a_{j}<0$ and $a_{j}+f_{j}>0$, then stage $j$
proliferates, but does so more slowly than it is lost to stage $j+1$.
Accordingly, no stage is able to establish itself independently of the other
stages. We now lift this assumption and allow stages $j$ such that $%
a_{j}+f_{j}<0$. Such a stage proliferates faster than it differentiates and
is thus able to establish itself independently of the remaining stages. We
call such a stage \emph{self-establishing}.

Since $f_j$ and $r_j$ are positive for all $j$, a self-establishing stage
also establishes infection at all other stages. The converse is also true.
Suppose a naive is infected with a small amount of $S_j>0$ where $j\notin {%
\mathop{\textup {SE}}}(\pi)$. Suppose also that $k\in{\mathop{\textup {SE}}}%
(\pi)$ with $[j+1,k) \cap {\mathop{\textup {SE}}}(\pi) = \emptyset$. Given
that $S_j>0$ for $t=0$, (\ref{Eq.BasicModel}) ensures that $S_k>0$ for any
sufficiently small $t$. $S_k$ then proceeds to proliferate.

As our fourth and last genericity assumption, we assume $a_{j}+f_{j}\neq 0$ $%
\forall$ $j\in \lbrack 0,n).$ We take ${\mathop{\textup {SE}}}(\pi )=\{j\mid
a_{j}+f_{j}<0\}$. In this section, we take it as a standing assumption that $%
{\mathop{\textup {SE}}}(\pi )\neq \emptyset $.

\begin{proposition}
\label{prop: SelfEstablishmentImpliesRegulation}Let $\pi $ be a set of
parameters. Furthermore, let $(S^{\ast },T^{\ast })$ be a biologically
meaningful infected fixed point. If $j\in {\mathop{\textup {SE}}}(\pi )$,
then $j\in \mathop{\textup {Reg}}(S^{\ast },T^{\ast })$.
\end{proposition}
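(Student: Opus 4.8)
The plan is to argue by contradiction. Suppose $j\in\se(\pi)$ but $j\in\unreg(S^*,T^*)$, so that $T_j^*=0$. Since the fixed point is infected, observation~7 of Section~\ref{sec:FixedPoints} applies and $\reg(S^*,T^*)\neq\emptyset$; in particular every $S_k^*\neq 0$ by observation~1. Let $h_j$ be the (unique) regulated stage with $[h_j+1,j]\subseteq\unreg(S^*,T^*)$ — this exists precisely because $\reg(S^*,T^*)$ is nonempty. Then, by observation~6 (or by \eqref{Eq.MultipleStepSvalues}), $S_j^*=bM_{h_j j}$, where $M_{h_j j}=\prod_{\ell\in[h_j,j)}M_\ell$ with $M_\ell=r_\ell f_\ell/(a_{\ell+1}+f_{\ell+1})$.

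The key step is to examine the fixed-point equation at stage $j+1$. First I would note that $j+1$ must itself be regulated: if $j+1$ were unregulated, then $S_{j+1}^*=S_j^*M_j=S_j^*\, r_jf_j/(a_{j+1}+f_{j+1})$, but with $j\in\se(\pi)$ we have $a_{j+1}+f_{j+1}$ possibly of either sign, so this estimate alone is not immediately contradictory; instead the cleaner route is to look directly at $\dot S_j=0$. From \eqref{Eq.FixedPointEquations1} with $T_j^*=0$ we get $r_{j-1}f_{j-1}S_{j-1}^*=S_j^*(a_j+f_j)$. The left-hand side is strictly positive (all parameters positive, all populations nonzero), while $a_j+f_j<0$ because $j\in\se(\pi)$; hence the right-hand side is strictly negative, a contradiction. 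This is the heart of the argument and I expect it to be essentially immediate once the setup is in place.

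So the main obstacle is not a hard estimate but rather making sure the preliminary facts are correctly invoked: that the fixed point being \emph{infected} forces all $S_k^*\neq 0$ (observation~1 plus observation~7), and that \emph{biologically meaningful} gives $S_j^*>0$ (positivity of $c_j$, or directly the standing convention). I would state these explicitly, then write the one-line contradiction: $0 < r_{j-1}f_{j-1}S_{j-1}^* = (a_j+f_j)S_j^* < 0$. Conclude that $T_j^*\neq 0$, i.e. $j\in\reg(S^*,T^*)$. Note that no genericity assumption on $\pi$ is needed here, consistent with the statement of the proposition, which only asks that $(S^*,T^*)$ be a biologically meaningful infected fixed point.
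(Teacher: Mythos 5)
Your core argument is exactly the paper's: with $T_j^{\ast}=0$ the fixed-point equation at stage $j$ gives $0 < r_{j-1}f_{j-1}S_{j-1}^{\ast} = (a_j+f_j)S_j^{\ast} < 0$ (equivalently $\dot S_j>0$ at the alleged fixed point), a contradiction, using only that all $S_k^{\ast}$ are strictly positive at a biologically meaningful infected fixed point. The preliminary detour through observation~7, the stage $h_j$ with $S_j^{\ast}=bM_{h_j j}$, and the aborted remark about stage $j+1$ are unnecessary but harmless, and your observation that no genericity is needed matches the paper.
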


\begin{proof}
If, on the contrary, $T_{j}^{\ast }=0$, it follows that $\dot{S_{j}}>0$,
contradicting the assumption that $(\S ,T^*)$ is a fixed point. 
\end{proof}

It should be clear that any self-establishing stage cannot be starved of
sufficient population to support regulation.

\begin{definition}
\label{def:generalStarves} We say that $j$ \emph{starves} $k$ and write $%
j\succ k$ if $M_{j k}<1$ and $[j+1,k] \cap {\mathop{\textup {SE}}}(\pi) =
\emptyset$.
\end{definition}

Note that this reduces to the previous definition when ${\ 
\mathop{\textup
{SE}}}(\pi)$ is empty.

\begin{proposition}
Let $\pi $ be such that $R_{0}>1.$ Then the relation $\succ $ is a strict
partial order.
\end{proposition}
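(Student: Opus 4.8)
The plan is to show that the modified relation $\succ$ from Definition~\ref{def:generalStarves} is transitive, anti-reflexive, and asymmetric, following the same scheme as the proof of the earlier proposition that $\succ$ (in the case $\mathop{\textup{SE}}(\pi)=\emptyset$) is a strict partial order, but now carrying along the extra bookkeeping condition on $\mathop{\textup{SE}}(\pi)$. Anti-reflexivity is immediate: $M_{jj}=1$ so $j\not\succ j$ regardless of the $\mathop{\textup{SE}}$ condition. Asymmetry will follow almost verbatim from the earlier argument: writing $R_0=(M_j\cdots M_{k-1})(M_k\cdots M_{j-1})$, if both $j\succ k$ and $k\succ j$ held we would get $M_{jk}<1$ and $M_{kj}<1$, hence $R_0=M_{jk}M_{kj}<1$, contradicting $R_0>1$. (One does not even need the $\mathop{\textup{SE}}$ clause for this direction, though of course it is part of the hypothesis $j\succ k$.)

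The substance is transitivity. Suppose $j\succ k$ and $k\succ \ell$; I want $j\succ \ell$. The $M$-product part is exactly as before: split into the two cyclic cases. If $k\in[j,\ell]$, then $(M_j\cdots M_{k-1})(M_k\cdots M_{\ell-1})=M_j\cdots M_{\ell-1}=M_{j\ell}$, a product of two quantities each $<1$, so $M_{j\ell}<1$. If instead $\ell\in[j,k]$, then $(M_j\cdots M_{k-1})(M_k\cdots M_{\ell-1})=R_0\cdot(M_j\cdots M_{\ell-1})<1$, and since $R_0>1$ we again get $M_{j\ell}<1$. So in both cases the first clause of Definition~\ref{def:generalStarves} holds for the pair $(j,\ell)$.

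It remains to verify the second clause: $[j+1,\ell]\cap\mathop{\textup{SE}}(\pi)=\emptyset$. Here I use that $j\succ k$ gives $[j+1,k]\cap\mathop{\textup{SE}}(\pi)=\emptyset$ and $k\succ\ell$ gives $[k+1,\ell]\cap\mathop{\textup{SE}}(\pi)=\emptyset$. In the case $k\in[j,\ell]$, the cyclic interval $[j+1,\ell]$ is contained in $[j+1,k]\cup[k+1,\ell]$ (indeed it is the disjoint union $[j+1,k]\cup[k+1,\ell]$ when the three indices occur in this cyclic order), and both pieces miss $\mathop{\textup{SE}}(\pi)$, so $[j+1,\ell]$ does too. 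In the case $\ell\in[j,k]$, one has $[j+1,\ell]\subseteq[j+1,k]$, which already misses $\mathop{\textup{SE}}(\pi)$ by the first hypothesis. Either way the second clause holds, completing the proof that $j\succ\ell$.

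The main obstacle — really the only place requiring care — is the cyclic interval arithmetic in the last paragraph: one must be careful that ``$k\in[j,\ell]$'' and ``$\ell\in[j,k]$'' genuinely exhaust the possibilities for the cyclic order of three distinct points on the $n$-cycle, and that the interval containments/unions asserted are the correct ones in each case (paying attention to whether endpoints like $j+1$, $k$, $k+1$, $\ell$ are included). Since these are exactly the same case splits already used successfully in the proof for the non-self-establishing case, I would present the $M$-product part by explicit reference to that earlier argument and devote the new writing to checking the $\mathop{\textup{SE}}(\pi)$-disjointness clause in each of the two cases.
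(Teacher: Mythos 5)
Your proof is correct, but it takes a genuinely different route from the paper's. You extend the earlier two-case cyclic-product argument (the one used when ${\mathop{\textup{SE}}}(\pi)=\emptyset$): anti-reflexivity from $M_{jj}=1$, asymmetry from $R_0=M_{jk}M_{kj}>1$, and transitivity by splitting into $k\in[j,\ell]$ versus $\ell\in[j,k]$ and then checking the ${\mathop{\textup{SE}}}(\pi)$-disjointness clause by the interval containments $[j+1,\ell]=[j+1,k]\cup[k+1,\ell]$ resp.\ $[j+1,\ell]\subseteq[j+1,k]$; all of these steps are sound. The paper instead argues structurally: the self-establishing stages cut $[0,n)$ into arcs $U_1,\dots,U_{m+1}$, each arc is given the linear order induced by starting at its self-establishing stage, and one observes that $j\succ k$ forces $j$ and $k$ to lie in the same arc with $j$ preceding $k$; within an arc there is no wrap-around, so transitivity of the product condition is immediate and asymmetry comes from the linear order. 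Two remarks on the comparison: the paper's argument never actually needs $R_0>1$ once ${\mathop{\textup{SE}}}(\pi)\neq\emptyset$ (the section's standing assumption), because wrap-around configurations are impossible; in your argument the second case of transitivity ($\ell\in[j,k]$) is in fact vacuous under that assumption, since then $[j+1,k]\cup[k+1,\ell]$ covers the whole cycle and would force ${\mathop{\textup{SE}}}(\pi)=\emptyset$ --- handling it anyway is harmless and has the advantage that your proof also covers the case ${\mathop{\textup{SE}}}(\pi)=\emptyset$, i.e.\ it subsumes the earlier proposition, at the cost of invoking $R_0>1$ where the paper's segment decomposition does not.
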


\begin{proof}
Let ${\mathop{\textup {SE}}}(\pi)$ consist of $j_1 < j_2 <\dots < j_m$.
These decompose $[0,n)$ into $U_1,\dots,U_{m+1}$ where $U_\ell =
[j_\ell,j_{\ell+1})$ for $\ell=1,\dots, m$ and $U_{m+1} = [j_m,j_1)$. We
order each of these segments according to the reverse of the order induced
by the cyclic order of $[0,n)$. That is, for $\ell=1,\dots,m$, $U_\ell$ is
ordered $j_\ell {\overset{\ast}{>}} j_\ell +1 {\overset{\ast}{>}} \dots {\ 
\overset{\ast}{>}} j_{\ell+1}-1$, while $U_{m+1}$ is ordered $j_m {\overset{
\ast}{>}} \dots {\overset{\ast}{>}} n-1 {\overset{\ast}{>}} 0 {\overset{\ast}%
{>}} \dots {\overset{\ast}{>}} j_1-1$. Notice that if $j\succ k$ then $j$
and $k$ lie in the same $U_\ell$ and $j {\overset{\ast}{>}} k$. It is easy
to see that this makes $\succ$ a strict partial order. 
\end{proof}

We define $\textup{Str}(\pi )$, $\textup{Unstr}(\pi )$, saturated and
moderated as before. The statement of Proposition \ref{prop:starves} as well
as the equivalence of moderation and saturation (Proposition \ref%
{prop:saturatedEqualsModerated}) and their necessity for stability (Theorem %
\ref{thm:unsaturatedUnstable}) can be proved analogously, where the result
of Proposition \ref{prop: SelfEstablishmentImpliesRegulation} plays an
important role.

In order to prove sufficiency in the presence of self-establishing stages,
we need the following lemma.

\begin{lemma}
Let the set of parameters $\pi $ be generic. Furthermore, let $(S^{\ast
},T^{\ast })$ be a biologically meaningful infected fixed point. Then

\begin{enumerate}
\item $\S _j>0$ for $j\in [0,n)$.

\item $\sigma_j = a_j + f_j + T^*_j > 0$ for $j\in[0,n)$.
\end{enumerate}
\end{lemma}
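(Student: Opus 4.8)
The plan is to derive both statements from facts already in hand, the only delicate point being that we are now allowing $a_j+f_j<0$ at the self-establishing stages. For (1), I would invoke Observation~1 of Section~\ref{sec:FixedPoints}, which says that if $\S_j=0$ for some $j$ then $(\S,\T)=(\vec{0},\vec{0})$. The proof of that observation uses only the fixed-point equations (\ref{Eq.FixedPointEquations1}) and (\ref{Eq.FixedPointEquations2}) together with the positivity of the $r_j$ and $f_j$; it never refers to the sign of $a_j+f_j$, so it carries over verbatim to the present setting. Since $(\S,\T)$ is an infected fixed point, it is not the zero fixed point, hence $\S_j\neq 0$ for all $j$, and the standing non-negativity of the populations upgrades this to $\S_j>0$ for all $j\in[0,n)$.

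For (2), I would read the inequality straight off the $S$-component of the fixed-point equations. Rearranging (\ref{Eq.FixedPointEquations1}) gives
\[
\S_j\,\sigma_j \;=\; \S_j\,(a_j+f_j+\T_j) \;=\; r_{j-1}f_{j-1}\,\S_{j-1}.
\]
By part (1) both $\S_{j-1}>0$ and $\S_j>0$, and $r_{j-1},f_{j-1}>0$ by the standing positivity assumptions, so the right-hand side is strictly positive; dividing by $\S_j>0$ yields $\sigma_j>0$. One could instead split into cases --- for $j\notin\se(\pi)$ one has $a_j+f_j>0$ and $\T_j\ge 0$, so $\sigma_j>0$ trivially, while for $j\in\se(\pi)$ Proposition~\ref{prop: SelfEstablishmentImpliesRegulation} gives $j\in\reg(\S,\T)$ --- but in the latter case the sign of $a_j+f_j+\T_j$ still has to be controlled via the displayed relation, so the uniform argument is cleaner.

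The main --- and essentially only --- obstacle is the bookkeeping point that Observation~1 of Section~\ref{sec:FixedPoints} remains valid once self-establishing stages are permitted, which it does because its proof is insensitive to the sign of $a_j+f_j$; after that, (2) is a one-line manipulation of (\ref{Eq.FixedPointEquations1}). Note that neither assertion actually requires the genericity of $\pi$.
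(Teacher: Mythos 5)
Your proof is correct and takes essentially the same route as the paper: both arguments rest on the fixed-point relation $r_{j-1}f_{j-1}\S_{j-1}=\S_j\sigma_j$ together with positivity of $r_j,f_j$ and biological meaningfulness, the paper propagating positivity of $\S_j$ and $\sigma_j$ forward around the cycle in one pass while you obtain $\S_j>0$ by (correctly) noting that Observation~1 of Section~\ref{sec:FixedPoints} is insensitive to the sign of $a_j+f_j$ and then read off $\sigma_j>0$ directly. Your observation that genericity is not needed is also consistent with the paper's argument.
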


\begin{proof}
Since $(\S ,T^*)$ is infected and biologically meaningful, there is $j$ so
that $\S _j > 0$. Hence, by (\ref{Eq.FixedPointEquations1}) $r_jf_j \S _j > 0
$, which in turn requires $-(a_{j+1}+f_{j+1}+T^*_{j+1})\S _{j+1} < 0$. This
requires $\S _{j+1}\ne 0$, and since $(\S ,T^*)$ is biologically meaningful, 
$\S _{j+1}> 0$. This, in turn, requires $\sigma_{j+1} =
a_{j+1}+f_{j+1}+T^*_{j+1} > 0$. Continuing in this way extends this to all $%
j\in[0,n)$. 
\end{proof}

The proof of Theorem~\ref{thm:moderatedImpliesStable} depends on equation~ (%
\ref{eq:EQ}), and we now revisit this in the case of a saturated fixed point
with self-establishing stages. To see that $Q(\lambda)$ is Hurwitz, we must
check that its coefficients are positive. For the product taken over
regulated stages, we note that $\sigma_j$ and $T^*_j b$ are both positive.
We turn to the product taken over unregulated stages. As we have seen, an
unregulated stage cannot be self-establishing. In particular, at any
unregulated stage, $a_j+f_j$ is positive and we proceed as before. Theorem~ %
\ref{thm:saturatedImpliesStable} and Corollary~\ref%
{Cor.CompletelyRegulatedFPisAlwaysStable} now follow as before.

\section{Discussion}

\label{sec:discussion}

In this paper, we have considered an ODE model (\ref{Eq.BasicModel}) of the
interactions between a host and a pathogen which uses a cycle of
antigenically distinct stages to establish and maintain infection. We were
able to give a simple expression for $R_{0}$ which determines whether the
pathogen is able to establish infection. If $R_{0}<1$, exposure to the
pathogen, even in large quantities, will not succeed in infecting the host
since the pathogen dies faster than it can replace itself. Indeed, the host
will clear the pathogen even in the absence of an immune response.

If $R_0>1$, then the pathogen is capable of infecting the host, and in this
model, an immune response is necessary to control the infection.\footnote{
In models which include the supply of uninfected tissue, the rate of supply
provides a limiting factor for infection in the absence of immune response.}
In this case, the absence of an immune response leads to unbounded expansion
of the pathogen population and presumably, the host's death.

The question then arises, when an immunocompetent host is infected with an
infectively competent pathogen (i.e., $R_{0}>1$), can they arrive at a stable
equilibrium? If so, what are the balancing populations for pathogen population
and host response? In terms of the model, this question translates to asking
which of the model's equilibria are stable and under what conditions. A key
ingredient towards answering this question was the ability to distinguish the
different stages of the pathogen in terms of the net yield with which they are
generated. We were able to formalize this distinction using a binary relation
$h\starves k$, which turns out to be a partial order if $R_0>1$ (see Section
\ref{sec:PartialOrder}).

Assuming the pathogen can establish infection the concepts of starvable and
unstarvable stages restrict the possible patterns of regulation.  The
starvation relation $j\starves k$ is based on comparison of the rate at which
stage $j$ produces stage $k$ with the rate at which stage $k$ is lost to death
and differentiation.  This determines the ability of stage $j$ to replenish
the population at stage $k$.  When this downstream yield is smaller than the
regulated population $b$, it is insufficient to support regulation.  In this
way, if $j \starves k$ and $j$ is regulated, then regulation of stage $k$
vanishes.  In this sense, the immune regulation of unstarvable stages is
sufficient to immunologically control the starvable stages. At steady state,
immune regulation is only required against those stages that are produced with
relatively higher yield and the equilibrium becomes moderated (see Section
\ref{sec:stability}). Accordingly, there is a unique biologically meaningful
infected equilibrium, namely, the one at which all unstarvable stages are
regulated and all starvable stages are unregulated.

If the parameters of the system allow for the existence of a biologically
meaningful completely regulated fixed point, the model predicts that this will
be the (only) stable equilibrium under the premise that the pathogen can
establish infection (Corollary \ref{Cor.CompletelyRegulatedFPisAlwaysStable}).
According to equations (\ref{Eq.MultipleStepTvalues}), the completely
regulated fixed point satisfies $S_{j}^{\ast }=b\text{ and }T_{j}^{\ast
}=r_{j-1}f_{j-1}-(a_{j}+f_{j}),\text{ }j=0,...,n-1.$ The inequalities
$T_{j}^{\ast }>0$ imply that the amplification factors $r_{j}$ are big enough
for every stage to be replenished at a higher rate than the sum $a_{j}+f_{j}$
of its own decay and transformation rate. Our model predicts that a pathogen
which produces every stage with such an effectivity can only be
immunologically controlled with a fine tuned immune response against every
stage the pathogen cycles through.

Having clarified the properties of equilibria, it is pertinent to keep in mind
that these are local properties of the model. We were unable to perform a
global analysis of the dynamics (see, for instance, \cite
{GlobaAnOfVirDynamics}). That is, we cannot give an account of trajectories
along which the system might approach such a steady state configuration, the
basins of attraction of attracting fixed points, nor can we eliminate the
possibility that the system exhibits limiting cycles or other attractors.  Our
naive expectation is that the unique biologically meaningful infected
equilibrium acts as a global attractor in the case where the host possesses an
immunocompetent response to every stage.

Our methods for establishing the uniqueness and existence of the unique stable
fixed point have very distinct flavors. The arguments for uniqueness have
clear biological interpretations. In particular, the starvation relation,
$j\succ k$ of Section~\ref{sec:PartialOrder} considers the maximal rate at
which stage $j$ can produce stage $k$ with the rate at which stage $k $ is
lost to stage $k+1$ and in turn compares these to the minimum population
necessary to support immune regulation. Likewise,
Theorem~\ref{thm:unsaturatedUnstable} describes the ability of an immune
response to invade and therefore alter a given equilibrium.

Our methods for establishing existence have a very different character. Here
we have turned to the eigenvalues of the system's Jacobian matrix and these
have lead us to the roots of its characteristic polynomial. These sorts of
computations can offer many challenges which have been the subject of intense
study for the last five decades within the control and systems theory
community. Many of the resulting techniques focus on the sorts of
perturbations that can be applied to a strictly Hurwitz polynomial without
losing the Hurwitz property. We predict that as biological models continue to
grow in complexity, mathematical biologists will increasingly turn to these
analytic tools. For entry into this topic we recommend the website
\cite{controlTheoryWebsite}.

Having stated the general predictions that can be derived from our model, we
would like to elaborate on the model's features, properties and predictions
within the biological framework of chronic EBV infection. It was indeed
human EBV pathology what inspired the basic structure of our model. To this
end, we provide a brief overview of what is known about EBV infection in
humans.

EBV is a highly successful pathogen which infects over 90\% of the adult human
population \cite{EBVreview}. It is horizontally transmitted via saliva
\cite{EBVtransmission} and is tropic to the epithelium of the oropharynx and
B-cells. It infects naive B-cells and causes them to become activated
B-blasts. These enter the germinal centers of Waldeyer's ring where they are
ultimately transformed into latently infected quiescent memory B-cells. These
circulate in the peripheral blood where they rarely express viral proteins and
are therefore invisible to the immune system \cite{[15]}. Upon return to
Waldeyer's ring, some of these are triggered to start producing virus
\cite{[16]}. These lytically infected B-cells ultimately burst, producing
free virus which may be transmitted horizontally or may reinfect additional
naive B-cells, thus completing the cycle \cite {tl2004}. The epithelium of
the oropharynx appears to play a role in amplifying virus for transmission
\cite{hadinotoShedding}. The host mounts T-cell responses against the blast,
germinal center and lytic stages, and an antibody response against the free
virus. The normal course of acute infectious mononucleosis involves an acute
phase during which latently infected memory B-cells can become as much as 50\%
of the memory B-cells in the peripheral circulation \cite{[17]}. Immune
response is also high during the acute phase. This resolves to a chronic phase
of persistent low-level infection \cite{[17]}, \cite{[18]}.

This suggests that a mathematical model of EBV infection might include the
following infected populations: activated B-blasts, infected germinal center
B-cells, quiescent infected memory B-cells and lytically infected B-cells.
These latter can be subdivided into stages exhibiting distinct antigenic
profiles, namely immediate early, early and late antigens. Due to its short
life span one might choose to omit free virus from such a model. In short, it
is not at all obvious what would be the ``right'' number of stages to
include in an EBV model, and this has motivated our choice to study a model
which is entirely general as to the number of stages.

We have chosen to omit the supply of uninfected tissue from our model for the
sake of symmetry. We would argue that this does not affect the validity of the
model unless a significant fraction of the uninfected tissue is lost to
infection. To see this, consider a modification of the model to take this into
account. Let us use $S_{n}$ to model uninfected tissue, let us take $S_{n-1}$
to model free virus\footnote{ In this case $T_{n-1}$ is humoral response.},
and let the rate of new infection be proportional to their product. We then
have
\begin{align*}
\dot{S}_{n-1}& =r_{n-2}f_{n-2}S_{n-2}-(a_{n-1}+\beta S_{n}+T_{n-1})S_{n-1} \\
\dot{S}_{n}& =\lambda -(a_{n}+r_{n}\beta S_{n-1})S_{n} \\
\dot{S}_{0}& =r_{n}\beta S_{n}S_{n-1}-(a_{0}+f_{0}+T_{0})S_{0}
\end{align*}
(Compare equation 3.1 of \cite{NowakMay}.) If $S_{n}$ is relatively
constant, then $S_{n}$ can be eliminated from the model and we can take
$f_{n-1}=\beta S_{n}$ and $r_{n-1}f_{n-1}=r_{n}\beta S_{n}$.

While acute phase EBV infection shows infected cells occupying a large
portion of peripheral memory, it seems unlikely that it consumes a large
portion of the uninfected naive B-cells supplied to Waldeyer's ring\cite 
{[15]}, \cite{souza05}. Accordingly, it seems unlikely that this will be an
obstruction to application of this model.

Our primary purpose in developing this model is to cast light on the
long-term persistence of EBV infection. Here we can surely treat the supply
of uninfected tissue as nearly constant.

Biologists will be interested in solutions to (\ref{Eq.BiologicalModel})
rather than (\ref{Eq.BasicModel}). The follow-on factors are unchanged. The
analogs of equations (~\ref{Eq.MultipleStepTvalues}) and (\ref
{Eq.MultipleStepSvalues}) are 
\begin{align*}
\S _j&=\frac{b}{c_j} \text{ and } T^*_j = \frac{1}{p_j}
\left(\frac{c_j}{c_{h_{j-1}}} r_{j-1}f_{j-1} M_{h_j j-1} - (a_j+f_j)\right)
\text{ } \forall \text{ } j\in\mathop{\textup {Reg}}(\S ,T^*) \\ 
\S _j &=
\frac{b}{c_{h_j}} M_{h_j j} \text{ and } T^*_j=0 \text{ } \forall \text{ }
j\in\mathop{\textup {Unreg}}(\S ,T^*)
\end{align*}
The starvation relation is now $j\succ k$ if $M_{jk} < \frac{c_j}{c_k}$.

People show different patterns of infection and immune response to EBV.  Under
10\% of the adult population is EBV negative. Given that over 90\% of the
adult population is EBV positive \cite{EBVreview}, \cite{BasicEBV_Virology},
it is clear that most of the EBV negative population has had repeated exposure
to EBV. Since these people show no EBV-specific antibodies \cite{henle}, it
seems clear that they are negative due to the virus's inability to establish
infection rather than the immune system's ability to clear it. This shows that
$R_{0}$ is a host-pathogen property, not simply a property of the pathogen.

Chronically infected people show very stable levels of infected memory
B-cells in the peripheral blood \cite{[18]}. This is an indication that we
are dealing with a stable fixed point of the host-pathogen system. However
the level of infected memory B-cells varies from person to person \cite
{[18]} as does the pattern of immune response. To the best of our
knowledge, there is always a humoral response to the free virus. Beyond
that, we have seen the individuals who exhibit T-cell responses to both the
blast and germinal center stages and to the lytic stages and individuals who
only exhibit detectable T-cell response to the lytic stages \cite{TomG}.
The latter patterns could arise from a gap in the available T-cell
repertory. We think it is at least as likely that the starvation
relationship varies due to population variation in $\pi$. Empirical
determination of these parameters is work in progress.

It is instructive to consider the variation levels of latently infected
memory B-cells. To the best of our knowledge, this compartment is never
regulated \cite{[17]}. Due to the very low levels of viral protein
expression, this stage has a very low overall antigenicity as compared to
any other stage. Taking $S_{\text{latent}}$ to be latently infected memory
B-cells, we see that a low value for $c_{\text{latent}}$ compared to other
stages implies starvability since we then have $M_{j {\text{latent}}} < 
\frac{c_j}{c_{\text{latent}}}$.

The expression $\S _j = \frac{b}{c_j}$, shows that the size of regulated
populations depends on factors governing immune response, not on the flow of
cells into that population. In particular, we can expect the size of the
lytically infected B-cell population to show little or no correlation with
the size of the latently infected memory B-cell population, while it might
well be a correlate with immune factors such as HLA types.

The question arises whether to expect a correlation between
$T_{\text{lytic}}^{\ast }$ and $\S _{\text{latent}}$. Here the picture is not
clear as the model's predictions depend on the reasons for the variation in
$\S _{\text{ latent}}$. If high values of $\S _{\text{latent}}$ are due to a
high rate flow into this compartment from the infected germinal center
population, then we would expect a positive correlation between
$T_{\text{lytic}}^{\ast } $ and $\S _{\text{latent}}$. If, on the other hand,
high values of $\S _{ \text{latent}}$ are due to low rate of flow from the
latently infected memory B-cell compartment into the lytically infected
population, then we would expect a negative correlation between
$T_{\text{lytic}}^{\ast }$ and $ \S _{\text{latent}}$.

On a different vein, from the point of view of the treatment of chronic
infections, the characteristics of the immune response revealed by our
model, in particular, the possibility of unregulated stages, give rise to
speculation on the possible targets of such a treatment. Which stages would
be most suitable drug targets, regulated or unregulated ones?

Finally, we would like say a few words about some mathematical questions
raised by this model. Our stability results are all local. We would
conjecture that the unique asymptotically stable biologically meaningful
infected fixed point is a global attractor for the positive orthant. This
would follow from the existence of an appropriate Lyapunov function.
However, we have not yet managed to find one.

We also conjecture that in the Lotka-Volterra case, the 2-plane that carries
the Lotka-Volterra dynamics is an attractor, at least in a neighborhood of
the fixed point. Our numerical investigations to date support this
conjecture.

There are a number of generalizations of this model which deserve
investigation. These include the supply of uninfected tissue, and immune
cross-reactivity in which an immune response might target multiple stages.

One interesting generalization concerns the situation in which each stage
may be regulated by an immune response to more than one epitope and in which
each of these epitopes occurs in multiple variants. This question is
addressed for single-stage pathogens in \cite{nowakMultipleEpitopes}.
Indeed Nowak suggested \cite{NowakEmail} that we examine this question for
multi-stage pathogens and we have preliminary results in the case where only
one stage exhibits this phenotype.

\newpage

\section*{Appendix}

\label{sec:appendix}

\subsection*{Derivation of an expression for the characteristic polynomial
of $J(S,J)$}

\begin{theorem}
\label{thm:charicteristicPolynomial} The characteristic polynomial of the
Jacobian $J(S,T)$ of system~(\ref{Eq.BasicModel}) is given by 
\begin{equation*}
P(\lambda )=(-1)^{n}\prod\limits_{j=0}^{n-1}\left( (S_{j}-\sigma
_{j}-\lambda )(b+\lambda )-S_{j}(-(a_{j}+f_{j})+b)\right)
-\prod\limits_{j=0}^{n-1}\left( \rho _{j}(b+\lambda -S_{j+1})\right) .
\end{equation*}
\end{theorem}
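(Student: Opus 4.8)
The plan is to compute $\det\big(J(S,T)-\lambda I\big)$ by exploiting the sparse block structure of the Jacobian. Ordering the coordinates as $(S_0,\dots,S_{n-1},T_0,\dots,T_{n-1})$, one reads off from (\ref{Eq.BasicModel}) that
\[
J(S,T)=\begin{pmatrix} J_{SS} & J_{ST}\\ J_{TS} & J_{TT}\end{pmatrix},
\]
where $J_{ST}=\mathrm{diag}(-S_j)$, $J_{TS}=\mathrm{diag}(T_j)$ and $J_{TT}=\mathrm{diag}(S_j-b)$ are $n\times n$ diagonal matrices, and $J_{SS}$ is the cyclic lower-bidiagonal matrix with diagonal entries $-\sigma_j=-(a_j+f_j+T_j)$ and sub-diagonal (plus corner) entries $(J_{SS})_{j,j-1}=\rho_{j-1}=r_{j-1}f_{j-1}$ (indices mod $n$).

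Next I would apply the Schur complement with respect to the block $J_{TT}-\lambda I=\mathrm{diag}(S_j-b-\lambda)$, which is invertible for all but finitely many $\lambda$. For such $\lambda$,
\[
\det(J-\lambda I)=\Big(\prod_{j=0}^{n-1}(S_j-b-\lambda)\Big)\cdot\det\!\big(C(\lambda)\big),
\qquad C(\lambda)=J_{SS}-\lambda I-J_{ST}(J_{TT}-\lambda I)^{-1}J_{TS}.
\]
Since the subtracted term is diagonal, $C(\lambda)$ is still cyclic lower-bidiagonal: its sub-diagonal/corner entries remain $\rho_{j-1}$, while its diagonal entries are $C_{jj}=-\sigma_j-\lambda+\dfrac{S_jT_j}{S_j-b-\lambda}$. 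An elementary Leibniz/cofactor computation for a cyclic lower-bidiagonal $n\times n$ matrix shows that only the identity permutation and the full $n$-cycle contribute, giving $\det C(\lambda)=\prod_j C_{jj}+(-1)^{n+1}\prod_j\rho_{j-1}$. Multiplying through by $\prod_j(S_j-b-\lambda)$ and distributing it into the first product clears the denominators and yields
\[
\det(J-\lambda I)=\prod_{j=0}^{n-1}\Big((S_j-b-\lambda)(-\sigma_j-\lambda)+S_jT_j\Big)+(-1)^{n+1}\prod_{j=0}^{n-1}\rho_{j-1}(S_j-b-\lambda).
\]
Both sides are polynomials in $\lambda$ agreeing off a finite set, hence the identity holds identically.

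It then remains to match this with the stated form. Reindexing $j\mapsto j-1$ in the wrap-around product and extracting a sign turns $(-1)^{n+1}\prod_j\rho_{j-1}(S_j-b-\lambda)$ into $-\prod_j\rho_j(b+\lambda-S_{j+1})$. For the first product one checks, factor by factor and using $\sigma_j=a_j+f_j+T_j$, the identity $(S_j-b-\lambda)(-\sigma_j-\lambda)+S_jT_j=-\big[(S_j-\sigma_j-\lambda)(b+\lambda)-S_j(-(a_j+f_j)+b)\big]$, which contributes an overall $(-1)^n$; this recovers exactly the expression in the theorem.

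\textbf{Main obstacle.}
The difficulty here is bookkeeping rather than conceptual: getting the parity right in the cyclic-bidiagonal determinant (the sign of the $n$-cycle), correctly reindexing the wrap-around product, and verifying the per-factor algebraic identity that converts $(S_j-b-\lambda)(-\sigma_j-\lambda)+S_jT_j$ into the displayed form. One should also be careful to justify the Schur-complement step by the polynomial-identity argument, since $J_{TT}-\lambda I$ is singular at $\lambda=S_j-b$.
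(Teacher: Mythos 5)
Your proposal is correct and follows essentially the same route as the paper: a block Schur-complement reduction of $\det(J-\lambda I)$ to an $n\times n$ cyclic lower-bidiagonal determinant (only the identity permutation and the full $n$-cycle survive), with the invertibility caveat handled by the same polynomial/holomorphic-identity argument. The only difference is cosmetic: the paper first performs a block similarity transformation so that the inverted block becomes the scalar matrix $-(b+\lambda)I$ before taking the Schur complement, whereas you invert $\mathrm{diag}(S_j-b-\lambda)$ directly and clear the resulting denominators by distributing the prefactor $\prod_{j}(S_j-b-\lambda)$; your sign bookkeeping, the reindexing of the wrap-around product, and the per-factor identity all check out.
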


\begin{proof}
The partial derivatives of the system are given by 
\begin{align*}
\frac{\partial F_{k}}{\partial S_{j}}& = 
\begin{cases}
r_{k-1}f_{k-1} & \text{if $j=k-1$} \\ 
-a_{k}-f_{k}-T_{k} & \text{if $j=k$} \\ 
0 & \text{otherwise}%
\end{cases}
\\
\frac{\partial F_{k}}{\partial T_{j}}& = 
\begin{cases}
-S_{k}~~~~~~~~~~~~~~~ & \text{if $j=k$} \\ 
0 & \text{otherwise}%
\end{cases}
\\
\frac{\partial G_{k}}{\partial S_{j}}& = 
\begin{cases}
T_{k}~~~~~~~~~~~~~~~~~~ & \text{if $j=k$} \\ 
0 & \text{otherwise}%
\end{cases}
\\
\frac{\partial G_{k}}{\partial T_{j}}& = 
\begin{cases}
S_{k}-b~~~~~~~~~~~~~ & \text{if $j=k$} \\ 
0 & \text{otherwise}%
\end{cases}%
\end{align*}
Thus, we have 
\begin{equation*}
J(S,T)=\left( 
\begin{matrix}
A & B \\ 
C & D%
\end{matrix}
\right)
\end{equation*}
where $A,B,C$ and $D$ are the following $n\times n$ matrices 
\begin{align*}
A& =\left( 
\begin{matrix}
-\sigma _{0} & 0 & \cdots & 0 & \rho _{n-1} \\ 
\rho _{0} & -\sigma _{1} & \ddots &  & 0 \\ 
0 & \ddots & \ddots & \ddots & \vdots \\ 
\vdots &  & \ddots & \ddots & 0 \\ 
0 & \cdots & 0 & \rho _{n-2}~ & ~-\sigma _{n-1}%
\end{matrix}
\right) \\
B& =\left( 
\begin{matrix}
-S_{0} & 0 & \cdots & 0 \\ 
0 & \ddots &  & 0 \\ 
\vdots &  & \ddots & \vdots \\ 
0 & \cdots & 0 & -S_{n-1}%
\end{matrix}
\right) \\
C& =\left( 
\begin{matrix}
T_{0} & 0 & \cdots & 0 \\ 
0 & \ddots &  & 0 \\ 
\vdots &  & \ddots & \vdots \\ 
0 & \cdots & 0 & T_{n-1}%
\end{matrix}
\right) \\
D& =\left( 
\begin{matrix}
S_{0}-b & 0 & \cdots & 0 \\ 
0 & \ddots &  & 0 \\ 
\vdots &  & \ddots & \vdots \\ 
0 & \cdots & 0 & S_{n-1}-b%
\end{matrix}
\right)
\end{align*}
using the notation $\rho _{j}=r_{j}f_{j}$ and $\sigma
_{j}=a_{j}+f_{j}+T_{j}. $ It is well known that a similarity transform does
not modify the eigenvalues of a matrix. The similarity transform we use
consists of the following operations: The first $n$ columns of $J(S,T)$ are
transformed to 
\begin{equation*}
\begin{pmatrix}
A^{\prime } \\ 
C^{\prime }%
\end{pmatrix}
= 
\begin{pmatrix}
A \\ 
C%
\end{pmatrix}
- 
\begin{pmatrix}
B \\ 
D%
\end{pmatrix}%
\end{equation*}
The resulting matrix 
\begin{equation*}
\left( 
\begin{matrix}
A^{\prime } & B \\ 
C^{\prime } & D%
\end{matrix}
\right)
\end{equation*}
undergoes the following row transformation 
\begin{equation*}
\begin{pmatrix}
C^{\prime \prime } & D^{\prime }%
\end{pmatrix}
= 
\begin{pmatrix}
C^{\prime } & D%
\end{pmatrix}
+ 
\begin{pmatrix}
A^{\prime } & B%
\end{pmatrix}%
\end{equation*}
The resulting matrix 
\begin{equation*}
\left( 
\begin{matrix}
A^{\prime } & B \\ 
C^{\prime \prime } & D^{\prime }%
\end{matrix}
\right)
\end{equation*}
then undergoes the following column permutations 
\begin{equation*}
\left( 
\begin{matrix}
B & A^{\prime } \\ 
D^{\prime } & C^{\prime \prime }%
\end{matrix}
\right)
\end{equation*}
followed by row permutations 
\begin{equation*}
\left( 
\begin{matrix}
D^{\prime } & C^{\prime \prime } \\ 
B & A^{\prime }%
\end{matrix}
\right)
\end{equation*}
The similar matrix obtained satisfies 
\begin{align*}
A^{\prime }& =\left( 
\begin{matrix}
S_{0}-\sigma _{0} & 0 & \cdots & 0 & \rho _{n-1} \\ 
\rho _{0} & S_{1}-\sigma _{1} & \ddots &  & 0 \\ 
0 & \ddots & \ddots & \ddots & \vdots \\ 
\vdots &  & \ddots & \ddots & 0 \\ 
0 & \cdots & 0 & \rho _{n-2}~ & ~S_{n-1}-\sigma _{n-1}%
\end{matrix}
\right) \\
C^{\prime \prime }& =\left( 
\begin{matrix}
T_{0}-\sigma _{0}+b & 0 & \cdots & 0 & \rho _{n-1} \\ 
\rho _{0} & S_{1}-\sigma _{1}+b & \ddots &  & 0 \\ 
0 & \ddots & \ddots & \ddots & \vdots \\ 
\vdots &  & \ddots & \ddots & 0 \\ 
0 & \cdots & 0 & \rho _{n-2}~ & ~T_{n-1}-\sigma _{n-1}+b%
\end{matrix}
\right) \\
D^{\prime }& =\left( 
\begin{matrix}
-b & 0 & \cdots & 0 \\ 
0 & \ddots &  & 0 \\ 
\vdots &  & \ddots & \vdots \\ 
0 & \cdots & 0 & -b%
\end{matrix}
\right)
\end{align*}
Now we look at the eigenvalue equations of the transformed matrix 
\begin{equation*}
\left( 
\begin{matrix}
\widetilde{D^{\prime }} & C^{\prime \prime } \\ 
B & \widetilde{A^{\prime }}%
\end{matrix}
\right) :=\left( 
\begin{matrix}
D^{\prime } & C^{\prime \prime } \\ 
B & A^{\prime }%
\end{matrix}
\right) -\lambda I
\end{equation*}
where $I$ is the $2n\times 2n$ unity matrix. If we assume $\lambda \neq -b,$
the matrix 
\begin{equation*}
\widetilde{D^{\prime }}=\left( 
\begin{matrix}
-(b+\lambda ) & 0 & \cdots & 0 \\ 
0 & \ddots &  & 0 \\ 
\vdots &  & \ddots & \vdots \\ 
0 & \cdots & 0 & -(b+\lambda )%
\end{matrix}
\right)
\end{equation*}
is invertible and we can use a well known formula for the determinant of
block-partitioned matrices 
\begin{align*}
Q(\lambda ) &:= \det \left( 
\begin{matrix}
\widetilde{D^{\prime }} & C^{\prime \prime } \\ 
B & \widetilde{A^{\prime }}%
\end{matrix}
\right) =\det (\widetilde{D^{\prime }})\det (\widetilde{A^{\prime }}-B 
\widetilde{D^{\prime }}^{-1}C^{\prime \prime }) \\
&=(-1)^{n}(b+\lambda )^{n} \\
&\ \ \ \ \det \left( 
\begin{matrix}
S_{0}-\sigma _{0}-\lambda -S_{0}\frac{(T_{0}-\sigma _{0}+b)}{b+\lambda } & 0
& \cdots & 0 & \rho _{n-1}-\frac{S_{0}\rho _{n-1}}{b+\lambda } \\ 
\rho _{0}-\frac{S_{1}\rho _{0}}{b+\lambda } & \ddots & \ddots &  & 0 \\ 
0 & \ddots & \ddots & \ddots & \vdots \\ 
\vdots &  & \ddots & \ddots & 0 \\ 
0 & \cdots & 0 & \rho _{n-2}-\frac{S_{n-1}\rho _{n-2}}{b+\lambda } & 
~S_{n-1}-\sigma _{n-1}-\lambda -S_{n-1}\frac{(T_{n-1}-\sigma _{n-1}+b)}{
b+\lambda }%
\end{matrix}
\right)
\end{align*}
Laplace's theorem applied to the last column of the matrix yields for $%
\lambda \in \mathbb{C}\backslash \{-b\}$ 
\begin{equation*}
Q(\lambda )=(-1)^{n}\left( \prod\limits_{j=0}^{n-1}\left( (S_{j}-\sigma
_{j}-\lambda )(b+\lambda )-S_{j}(-(a_{j}+f_{j})+b)\right)
+(-1)^{n+1}\prod\limits_{j=0}^{n-1}\left( \rho _{j}(b+\lambda
-S_{j+1})\right) \right)
\end{equation*}
Summarizing, we have obtained an expression $Q(\lambda )$ that is equal to
the characteristic polynomial $P(\lambda )$ of $J(S,T)$ for all $\lambda \in 
\mathbb{C} \backslash \{-b\}.$ If we drop the requirement $\lambda \in 
\mathbb{C} \backslash \{-b\},$ $Q$ is a polynomial on $\mathbb{C} $ and thus
a holomorphic function on $\mathbb{C}.$ As a consequence, we have two
holomorphic functions $Q: \mathbb{C}\rightarrow \mathbb{C}$ and $P: \mathbb{C%
} \rightarrow \mathbb{C} $ that are identical on the connected open subset $%
\mathbb{C} \backslash \{-b\}.$ By the well-known identity theorem for
holomorphic functions, $Q=P$ must hold. 
\end{proof}

\bibliographystyle{plain}
\bibliography{cyclicPathogen}

\begin{thebibliography}{10}

\bibitem{OnRouthsCriterion}
S.~Barnett and D.~D. {\v{S}}iljak.
\newblock Routh's algorithm: a centennial survey.
\newblock {\em SIAM Rev.}, 19(3):472--489, 1977.

\bibitem{TomG}
Thomas Greenough.
\newblock personal communication, 2009.

\bibitem{MalariaReview}
Brian~M Greenwood, Kalifa Bojang, Christopher~JM Whitty, and Geoffrey~AT
  Targett.
\newblock Malaria.
\newblock {\em The Lancet}, 365(9469):1487--1498, 2005.

\bibitem{hadinotoShedding}
Vey Hadinoto, Michael Shapiro, Chia~Chi sun, and David~A. Thorley-Lawson.
\newblock The dynamics of ebv shedding implicate a central role for epithelial
  cells in amplifying viral output.
\newblock {\em PLoS Pathog.}, 5(7), 2009.

\bibitem{heffernan}
J.~M. Heffernan, R.~J. Smith, and L.~M. Wahl.
\newblock Perspectives on the basic reproductive ratio.
\newblock {\em J. R. Soc. Interface}, 2:281--293, 2005.

\bibitem{henle}
W.~Henle and G~Henle.
\newblock {\em The Epstien-Barr Virus}, chapter Seroepidemiology of the virus,
  pages 61--78.
\newblock Springer-Verlag, 1979.

\bibitem{hirschSmale}
Morris~W. Hirsch and Stephen Smale.
\newblock {\em Differential equations, dynamical systems, and linear algebra}.
\newblock Academic Press, New York,, 1974.

\bibitem{EBVtransmission}
R.~J. Hoagland.
\newblock The transmission of infectious mononucleosis.
\newblock {\em American Journal of Medical Science}, 229:262--272, 1955.

\bibitem{[15]}
D.~Hochberg, J.~M. Middeldorp, M.~Catalina, J.~L. Sullivan, K.~Luzuriaga, and
  D.~A. Thorley-Lawson.
\newblock Demonstration of the burkitt's lymphoma epstein-barr virus phenotype
  in dividing latently infected memory cells in vivo.
\newblock {\em Proc Natl Acad Sci U S A}, 101(1):239--44, 2004.

\bibitem{[17]}
D.~Hochberg, T.~Souza, M.~Catalina, J.~L. Sullivan, K.~Luzuriaga, and D.~A.
  Thorley-Lawson.
\newblock Acute infection with epstein-barr virus targets and overwhelms the
  peripheral memory b-cell compartment with resting, latently infected cells.
\newblock {\em J Virol}, 78(10):5194--204, 2004.

\bibitem{[18]}
G.~Khan, E.~M. Miyashita, B.~Yang, G.~J. Babcock, and D.~A. Thorley-Lawson.
\newblock Is ebv persistence in vivo a model for b cell homeostasis?
\newblock {\em Immunity}, 5(2):173--9, 1996.

\bibitem{[16]}
L.~L. Laichalk and D.~A. Thorley-Lawson.
\newblock Terminal differentiation into plasma cells initiates the replicative
  cycle of epstein-barr virus in vivo.
\newblock {\em J Virol}, 79(2):1296--307, 2005.

\bibitem{Lin'sPaperOnRobustness}
S.~H. Lin, I.~K. Fong, Y.~T. Juang, T.~S. Kuo, and C.~F. Hsu.
\newblock Stability of perturbed polynomials based on the argument principle
  and {Nyquist} criterion.
\newblock {\em International Journal of Control}, 50(1):55--63, 1989.

\bibitem{controlTheoryWebsite}
John~M. McNamee.
\newblock Mcnamee's bibliography on roots of polynomials, 2002.

\bibitem{NowakEmail}
Martin~A. Nowak.
\newblock personal communication, 2009.

\bibitem{NowakMay}
Martin~A. Nowak and Robert~M. May.
\newblock {\em Virus dynamics}.
\newblock Oxford University Press, Oxford, 2000.
\newblock Mathematical principles of immunology and virology.

\bibitem{nowakMultipleEpitopes}
Martin~A. Nowak, Robert~M. May, and Karl Sigmund.
\newblock Immune responses against multiple epitopes.
\newblock {\em J. Theor. Biol.}, 175:325--353, 1995.

\bibitem{Perelson1999}
A.S. Perelson and P.W. Nelson.
\newblock Mathematical analysis of {HIV-1} dynamics in vivo.
\newblock {\em SIAM Review}, 41(1):3--44, 1999.

\bibitem{GlobaAnOfVirDynamics}
J.~Pr\"uss, R.~Zacher, and R.~Schnaubelt.
\newblock Global asymptotic stability of equilibria in models for virus
  dynamics.
\newblock {\em Math. Model. Nat. Phenom.}, 3(7):126--142, 2008.

\bibitem{BasicEBV_Virology}
A.~B. Rickinson and E.~Kieff.
\newblock Epstein-barr virus.
\newblock In D.M. Knipe and P.M. Howley, editors, {\em Virology}, volume~2,
  pages 2575--2628. Lippincott Williams and Wilkins, New York, 4th ed. edition,
  2001.

\bibitem{souza05}
T.~A. Souza, B.~C. Stollar, J.~L. Sullivan, K.~Luzuriaga, and D.~A.
  Thorley-Lawson.
\newblock Peripheral b cells latently infected with epstein-barr virus display
  molecular hallmarks of classical antigen-selected memory b cell.
\newblock {\em Proc. Natl. Acad. Sci.}, pages 18093--18098, 2005.

\bibitem{EBVreview}
D.A. Thorley-Lawson.
\newblock Epstein-barr virus: exploiting the immune system.
\newblock {\em Nature Reviews Immunology}, 1(1):75--82, 2001.

\bibitem{ThorleyLawson2008195}
David~A. Thorley-Lawson, Karen~A. Duca, and Michael Shapiro.
\newblock Epstein-barr virus: a paradigm for persistent infection - for real
  and in virtual reality.
\newblock {\em Trends in Immunology}, 29(4):195 -- 201, 2008.

\bibitem{tl2004}
David~A. Thorley-Lawson and A.~Gross.
\newblock Persistence of the {Epstein-Barr} virus and the origins of associated
  lymphomas.
\newblock {\em New England Journal of Medicine}, 350(13):1328--1337, 2004.

\bibitem{ChagasReview}
K.~M. Tyler and D.~M. Engman.
\newblock The life cycle of trypanosoma cruzi revisited.
\newblock {\em International Journal for Parasitology}, 31(5-6):472 -- 480,
  2001.

\bibitem{KillerCellDynamics}
Dominik Wodarz.
\newblock {\em Killer Cell Dynamics: Mathematical and Computational Approaches
  to Immunology}.
\newblock Springer Verlag, 2007.

\end{thebibliography}

\end{document}